\documentclass[nofootinbib,groupedaddress,superscriptaddress,aps,prl,reprint]{revtex4-2}

\let\coloneqq\relax
\let\eqqcolon\relax

\usepackage[latin1]{inputenc}
\usepackage{amsmath, amssymb, amsthm}
\usepackage{mathtools}
\usepackage{dsfont}
\usepackage[dvipsnames]{xcolor}
\usepackage[pdftex]{hyperref}
\definecolor{apsblue}{HTML}{2E3092}
\hypersetup{
    colorlinks=true,
    linkcolor=apsblue,    
    citecolor=apsblue,    
    urlcolor=apsblue    
}
\usepackage{braket}
\usepackage[shortlabels]{enumitem}
\usepackage[cal=boondox]{mathalfa}
\usepackage{graphicx}
\usepackage{stackengine}
\usepackage{comment}

\newtheorem{thm}{Theorem}
\newtheorem{prop}[thm]{Proposition}
\newtheorem{lemma}[thm]{Lemma}
\newtheorem{cor}[thm]{Corollary}
\newtheorem{Def}[thm]{Definition}

\makeatletter
\def\thmhead@plain#1#2#3{%
  \thmname{#1}\thmnumber{\@ifnotempty{#1}{ }\@upn{#2}}%
  \thmnote{ {\the\thm@notefont#3}}}
\let\thmhead\thmhead@plain
\makeatother

\theoremstyle{definition}
\newtheorem{rem}[thm]{Remark}

\newcommand{\bb}{\begin{equation}\begin{aligned}\hspace{0pt}}
\newcommand{\ee}{\end{aligned}\end{equation}}
\newcommand*{\coloneqq}{\mathrel{\vcenter{\baselineskip0.5ex \lineskiplimit0pt \hbox{\scriptsize.}\hbox{\scriptsize.}}} =}
\newcommand*{\eqqcolon}{= \mathrel{\vcenter{\baselineskip0.5ex \lineskiplimit0pt \hbox{\scriptsize.}\hbox{\scriptsize.}}}}
\newcommand{\eqt}[1]{\stackrel{\mathclap{\scriptsize \mbox{#1}}}{=}}
\newcommand{\ketbra}[1]{\ket{#1}\!\!\bra{#1}}

\DeclareMathAlphabet{\pazocal}{OMS}{zplm}{m}{n}
\newcommand{\id}{\mathds{1}}
\newcommand{\R}{\mathds{R}}
\newcommand{\MM}{\pazocal{M}}

\DeclareMathOperator{\tr}{tr}

\stackMath

\stackMath
\newcommand\xxlrightarrow[2][]{\mathrel{%
  \setbox2=\hbox{\stackon{\scriptstyle#1}{\scriptstyle#2}}%
  \stackunder[1pt]{%
    \xrightarrow{\makebox[\dimexpr\wd2\relax]{$\scriptstyle#2$}}%
  }{%
   \scriptstyle#1\,%
  }%
}}

\newcommand{\tendsnl}[1]{\xxlrightarrow[\! n\rightarrow \infty\!]{#1}}

\newcommand{\G}{{\cal G}}

\usepackage{xfrac}

\usepackage{newtxtext,txfonts}

\begin{document}

\title{Optimal Discrimination of Gaussian States by Gaussian Measurements}

\author{Leah Turner}
\email{pmylt1@nottingham.ac.uk}
\affiliation{School of Mathematical Sciences and Centre for the Mathematical and Theoretical Physics of Quantum Non-Equilibrium Systems, University of Nottingham, University Park, Nottingham, NG7 2RD, United Kingdom}

\author{Ludovico Lami}
\email{ludovico.lami@sns.it}
\affiliation{Scuola Normale Superiore, Piazza dei Cavalieri 7, 56126 Pisa, Italy}

\author{M\u{a}d\u{a}lin Gu\c{t}\u{a}}
\email{madalin.guta@nottingham.ac.uk}
\affiliation{School of Mathematical Sciences and Centre for the Mathematical and Theoretical Physics of Quantum Non-Equilibrium Systems, University of Nottingham, University Park, Nottingham, NG7 2RD, United Kingdom}

\author{Gerardo Adesso}
\email{gerardo.adesso@nottingham.ac.uk \\}
\affiliation{School of Mathematical Sciences and Centre for the Mathematical and Theoretical Physics of Quantum Non-Equilibrium Systems, University of Nottingham, University Park, Nottingham, NG7 2RD, United Kingdom}

\begin{abstract} 
   Are Gaussian measurements enough to distinguish between Gaussian states? Here, we tackle this question by focusing on the max-relative entropy as an operational distinguishability metric. Given two general multimode Gaussian states, we derive a condition, based on their covariance matrices, that completely determines whether or not there exists an optimal Gaussian measurement achieving the max-relative entropy. When the condition is satisfied, we find this optimal measurement explicitly. When the condition is not met, there is a strict gap between the distinguishability achievable by Gaussian measurements and the unconstrained max-relative entropy in which all measurements are allowed. We illustrate our results in the single-mode setting, and show examples of states for which this gap can be made arbitrarily large, revealing novel instances of Gaussian data hiding. 
\end{abstract}

\maketitle

\paragraph{\bfseries Introduction.} 
Quantum state discrimination underpins many tasks in quantum information, from metrology to communication and cryptography \cite{disco1,disco2,disco3}. In principle, any valid quantum measurement may be used to distinguish states. In practice, physical implementations typically allow only restricted sets of operations, motivating the study of information-theoretic quantities under operational constraints. Well-known examples include restrictions to local operations and classical communication \cite{locc1,locc2,locc3,locc4,locc5,Leone_2025,Chefles_2004,Fan_2004,PhysRevA.88.052313} --- a typical paradigm in distributed scenarios --- and recently studied limitations arising from bounded computational complexity \cite{computational1,computational2,computational3,computational4,avidan2025fullyquantumcomputationalentropies,chen2017computationalnotionsquantumminentropy,Leone_2025}, which provide a more realistic platform for quantum information processing. 

A third example, and the focus of this Letter, is the restriction to {\em Gaussian states, operations, and measurements}, that is particularly natural in continuous variable quantum information \cite{gaussian1,Nha_2005,Oh_2019no2,PhysRevA.78.022320,Roberson_2021,Oh_2019,gaussian2,gaussian3}. These processes admit an elegant phase-space description and can be implemented experimentally using linear optics, squeezing, and homodyne or heterodyne detection, making them central to photonic quantum technologies \cite{BUCCO}. This raises a fundamental question: can {\it only} Gaussian measurements suffice to discriminate between Gaussian states to the same standard as {\it arbitrary} quantum measurements?

Surprisingly, there are very limited known cases in which the answer is affirmative.  Gaussian measurements are optimal for estimating the fidelity between pure single-mode Gaussian states with zero mean, but rarely when the states are mixed \cite{Nha_2005,Oh_2019no2}. Homodyne measurements optimally discriminate between two single-mode coherent states \cite{PhysRevA.78.022320}, or between mixtures of coherent states distributed along a line in phase space \cite{Roberson_2021}. For phase estimation with single-mode Gaussian states, Gaussian measurements are optimal only when the initial state is a displaced thermal state or squeezed vacuum \cite{Oh_2019}. The sub-optimality of Gaussian measurements  can lead to a phenomenon known as {\em Gaussian data hiding}, where states almost perfectly distinguishable by general measurements are nearly indistinguishable via Gaussian measurements alone \cite{PhysRevA.104.052428,sabapathy2021bosonicdatahidingpower,wang2025gaussianquantumdatahiding}.

In this Letter we address the question above for a key distinguishability measure: the {\it max-relative entropy} \cite{min-maxRE,datta2009max}. This quantity, which sits at the upper end of all sandwiched R\'enyi divergences, plays an important role in quantum hypothesis testing and prominently features in quantum resource theories, where it is used to define operational resource monotones \cite{Bu_2017,min-maxRE,datta2009max,finiteconvexrobustness}. Crucially, the max-relative entropy $D_{\max}(\rho\|\sigma)$ is achievable by an optimal measurement on the states $\rho$ and $\sigma$, provided all  measurements are allowed \cite{bertasquallor}. Our question thus becomes: \emph{when can the max-relative entropy between two Gaussian states be achieved by only Gaussian measurements?}

We answer this question completely. Unlike previously known cases, we find that Gaussian measurements can actually be optimal for a wide range of Gaussian discrimination problems gauged by the max-relative entropy. For arbitrary multimode Gaussian states $\rho$ and $\sigma$, we derive an exact condition on their covariance matrices that determines whether Gaussian measurements can or cannot attain $D_{\max}(\rho\|\sigma)$. When the condition is satisfied, we explicitly construct the optimal Gaussian measurement, which reduces to a homodyne-type detection at the boundary of the achievable region; outside this region, a finite gap emerges between the distinguishability achievable with Gaussian measurements and the unrestricted max-relative entropy. 
We showcase our results through the physically insightful case of single-mode states, and show via explicit examples that Gaussian measurements in this setting can, in fact, be significantly worse than arbitrary ones, showing a novel incarnation of Gaussian data hiding.

\label{sec:prelim}
%This Section provides a brief overview of relevant material for this paper, covering Gaussian quantum information, and the max-relative entropy. Throughout, $\|\cdot\|_\infty$ denotes the operator norm, $\ddh$ is the set of quantum states, and $\ln$ denotes logarithm base {\rm e}.
%We begin by providing a succinct overview on Gaussian quantum information and the max-relative entropy. Throughout this paper, $\|\cdot\|_\infty$ denotes the operator norm, $\ddh$ is the set of quantum states, and $\ln$ denotes logarithm base {\rm e}.

\paragraph{\bfseries Gaussian states and measurements.} A continuous variable quantum system can be described as a system of $N$ modes, each of which has a corresponding position quadrature operator $q$ and momentum quadrature operator $p$, collected into a vector $x\coloneqq(q_1,p_1,...,q_N,p_N)$. 
The commutation relations between the quadrature operators are encoded in the symplectic form 
%\begin{equation}\label{eq:symplectic}    
$\Omega\coloneqq\bigoplus_{j=1}^N{{\ 0\ \ 1}\choose{-1\, \ 0}}$,
%\end{equation}
via  $[x_j,x_k]=i\Omega_{jk}$.
An $N$-mode continuous variable quantum state $\rho$ can be described by its Wigner function \cite{wignerfunc}, a quasiprobability distribution of the quadrature operators over the phase space.
%, defined as 
%\begin{equation}\label{wig}
%    $W^\rho(r) = \pi^{-2N} \int_{\mathbb{R}^{2N}} \tr[{\rm e}^{-ir^T \Omega x}\rho] {\rm e}^{is^T\Omega r} d^{2N} s$.
%\end{equation}
Those states $\rho$ for which the Wigner function $W^\rho(x)$ is Gaussian, i.e.\ of the form 
\begin{equation}
    W^\rho(x) = \frac{1}{\pi^N\sqrt{\det(V)}}{\rm e}^{{-(x-r)^T V^{-1}(x-r)}},
\end{equation}
form the set of Gaussian states, and arise naturally as thermal and ground states of quadratic Hamiltonians \cite{CVQI_and_beyond,BUCCO}. Such states are uniquely described by their first moments $r\coloneqq\tr[\rho x]$ and second moments $V$, $V_{jk}\coloneqq\tr[\rho\{x_j-r_j,x_k-r_k\}]$. A symmetric matrix $V$ describes a valid quantum covariance matrix if and only if it satisfies the {\it bona fide} condition \cite{PhysRevA.49.1567}
\begin{equation}\label{eq:bonafide}
    V\pm i\Omega \geq 0.
\end{equation}

A Gaussian operation is a completely positive map that will preserve the Gaussian nature of a state. Each Gaussian unitary transformation can be described in phase space by some symplectic matrix $S$, i.e.\ a matrix preserving the symplectic form, $S\Omega S^T=\Omega$.
A general Gaussian unitary operation will consist of combinations of displacements, beam splitters, rotations, and squeezing, all of which are relatively available with current optical technologies.
By a symplectic transformation, any legitimate quantum covariance matrix $V$ can be diagonalized via its Williamson decomposition \cite{williamson},
\begin{equation}\label{williamson}
    V=\mbox{$S^T\left(\bigoplus_{j=1}^N D_j\right) S$},
\end{equation}
where $D_j = V_\rho^{(j)} \id_2$, with $\{V_\rho^{(j)}\}_j$ the symplectic eigenvalues. This amounts to transforming to a basis in which the corresponding state $\rho$ is a product of single-mode thermal states.

A Gaussian measurement is a measurement that produces a Gaussian probability distribution for any Gaussian input state. Its action is specified by a Gaussian {\it seed} state with covariance matrix $\gamma$, so that acting on a Gaussian state with first and second moments $r$ and $V$, it produces a classical Gaussian distribution with mean $r$ and covariance $V+\gamma$. Pure measurements satisfy $\det\gamma=1$, while noisy measurements correspond to $\det\gamma>1$ \cite{noGdistillation3}. The limit of an infinitely squeezed seed yields a singular measurement, corresponding in the single-mode case to homodyne detection of some quadrature. In practice, any Gaussian measurement can be realized by adding Gaussian ancillary systems, performing a global Gaussian unitary, implementing homodyne detections, and discarding subsystems \cite{noGdistillation3}.

\begin{comment}
In the rest of this work we shall make use of a few known results for Gaussian states, which we will now state.
\begin{lemma}[\cite{petzrecovery,grenyi2018}]\label{lemma:covmat}
    For two zero mean Gaussian quantum states $X,Y$, $X>Y$, the operators $X^{\pm\frac12}YX^{\pm\frac12}$ are trace-class Gaussian operators, and have covariance matrices 
    \begin{equation}
\pm V_X-\left( \sqrt{\mathds{1}+(V_X \Omega)^{-2}}\right) V_X (V_X \pm V_Y)^{-1}V_X \left( \sqrt{\mathds{1}+(\Omega V_X )^{-2}}\right).
    \end{equation}
\end{lemma}

We will also use a slight variation of \cite[Lemma 10]{grenyi2018}, as stated below:
\begin{lemma}\label{collapse}
For a Gaussian quantum state $X$, we have 
%\begin{equation}
    $V_X-i\Omega=\left( \sqrt{\mathds{1}+(V_X \Omega)^{-2}}\right) V_X (V_X+i\Omega)^{-1}V_X \left( \sqrt{\mathds{1}+(\Omega V_X )^{-2}}\right)$. 
%    \end{equation}
\end{lemma}
\end{comment}

\paragraph{\bfseries Max-relative entropy.}
The quantum \emph{max-relative entropy} between two states $\rho$ and $\sigma$ is defined as \cite{min-maxRE,datta2009max}
\begin{equation}\label{dmaxdef}
    D_{\max}(\rho\|\sigma)\coloneqq
    \begin{cases}
\inf\,\{\lambda\in\mathbb{R}_{\geq 0}:\rho\leq {\rm e}^\lambda \sigma\},& \text{supp}(\rho)\subseteq\text{supp}(\sigma)\\
\infty, & \text{otherwise}.
    \end{cases}
\end{equation}
Two equivalent and instructive formulations for the max-relative entropy are written in terms of the operator norm,
\begin{equation}\label{opnormdef}
    D_{\max}(\rho\|\sigma)\coloneqq\ln\|\sigma^{-\frac12}\rho\sigma^{-\frac12}\|_\infty,
\end{equation}\label{measureddef}
and via the {\em  measured} max-relative entropy,
\begin{equation}\label{dmaxmeas}
  \!\! D_{\max}(\rho\|\sigma)\coloneqq\ln \sup\left\{\frac{\tr[\rho E]}{\tr[\sigma E]}:E \text{ is a POVM element}\right\}.
\end{equation}
The equivalence between (\ref{dmaxmeas}) and (\ref{dmaxdef}) means that one can, in general, achieve $D_{\max}(\rho\|\sigma)$ by performing an optimal quantum measurement on $\rho$ and $\sigma$ \cite{bertasquallor}; in fact, the supremum in (\ref{dmaxmeas}) can be restricted to projective measurements \cite[Corollary~3.46]{strongconverse}. 

The max-relative entropy plays a prominent operational role in the setting of state discrimination. Specifically, given a system described by the ensemble $\{p_k,\rho_k\}_{k=1}^m$, i.e.\ prepared as one of the $m$ states $\rho_k$ with probability $p_k$, the optimal Bayesian error probability of determining which of these states the system was prepared in is given by $1-\inf_{\sigma} \max_{1\leq k\leq m}p_k {\rm e}^{D_{\max}(\rho_k\|\sigma)}$ \cite{min-maxRE}.
The max-relative entropy also enjoys operational significance in quantum resource theories \cite{coherence,RT-review}. For example, it serves to define the max-relative entropy of resource \cite{Bu_2017,min-maxRE}, and gives rise to the generalized robustness monotone via taking the exponential \cite{datta2009max}, equipping $D_{\max}$ with a universal operational interpretation in channel discrimination \cite{finiteconvexrobustness,finitenonconvex,infiniteconvex,catchytitle}.
Alternatively, in  hypothesis testing, the max-relative entropy provides a strict limit on the achievable error rate: requiring the type II error to decay exponentially at a rate governed by $D_{\max}$ will result in a type I error converging infinitely fast to $1$.

For two Gaussian states $\rho,\sigma$ with zero displacements and covariance matrices $V_\rho,V_\sigma$ respectively, satisfying $V_\sigma>V_\rho$, the max-relative entropy between $\rho$ and $\sigma$, when the measurements are unrestricted, can be found via \cite[Theorem~24]{grenyi2018}
\begin{equation}\label{gdmaxformula}
%\begin{aligned}
     D_{\max}(\rho\|\sigma)=\frac12\!\ln\mbox{$\left[\dfrac{\det(V_\rho+i\Omega)}{\det(V_\sigma+i\Omega)}\right]$} 
     -\frac12\!\tr\left[\text{arcoth}\big(\!\sqrt{-V'\Omega V'\Omega}\big)\right]\!,
%\end{aligned}
\end{equation}
where $V' \coloneqq V_\rho+G(V_\rho \Omega) V_\rho (V_\sigma - V_\rho)^{-1}V_\rho G(\Omega V_\rho)$, with
%\begin{equation}
$G(V)\coloneqq \sqrt{\mathds{1}+V^{-2}}$.
%\end{equation}
%We shall use (\ref{gdmaxformula}) to calculate the measured max-relative entropy in the case where the measurements are unrestricted, and explicitly compare this to the measured max-relative entropy when restricted to only Gaussian measurements.

\paragraph{\bfseries Gaussian measured max-relative entropy.}
In this Letter, we are concerned with determining how well Gaussian measurements alone can distinguish between Gaussian states. 
Recall that the classical max-relative entropy between two multivariate Gaussian distributions $P(A,a)$ and $Q(B,b)$, with $B>A$,  is given by \cite{GClassicalRenyi} $D_{\max}(P\|Q)=\frac12\ln\left(\frac{\det B}{\det A}\right)+\frac12(a-b)^T(B-A)^{-1}(a-b).$
Substituting $P$ and $Q$ to be the post-measurement distributions resulting from a Gaussian measurement with seed covariance matrix $\gamma$ implemented on two Gaussian quantum states $\rho$ and $\sigma$, respectively, and optimising the ensuing expression over all possible choices of $\gamma$, we define the \emph{Gaussian measured max-relative entropy}, our main quantity of interest.
\begin{Def}\label{def:gdmax}
Let $\rho$ and $\sigma$ be Gaussian states with first moments $r_\rho$ and $r_\sigma$, and second moments $V_\rho$ and $V_\sigma$ respectively, with $V_\sigma>V_\rho$. The Gaussian measured max-relative entropy is
    \begin{equation}\label{eq:gdmax}
\begin{aligned}
    D^{\cal G}_{\max}(\rho\|\sigma)&\coloneqq \sup_{\gamma\geq i\Omega}\frac12\ln\left[\frac{\det (V_\sigma+\gamma)}{\det (V_\rho+\gamma)}\right] \\
    &+\frac12(r_\rho-r_\sigma)^T(V_\sigma-V_\rho)^{-1}(r_\rho-r_\sigma).
    \end{aligned}
\end{equation}
\end{Def}

Note that all dependence on first moments is in the second-line term, which is independent of  $\gamma$. We will thus restrict without loss of generality to Gaussian states with $r_\rho=r_\sigma=0$, as displacements play no role in the measurement optimization. We will assume throughout that  $\sigma>0$ and $V_\sigma>V_\rho$, so we can compare our quantity (\ref{eq:gdmax}) to the max-relative entropy (\ref{gdmaxformula}) when the measurements are unrestricted, without divergence issues. Furthermore, we can limit the maximization in Definition~\ref{def:gdmax} to {\it pure} Gaussian measurements only ($\det\gamma=1$) \cite{notepuremeasurement}\nocite{Lami_2016}.

\paragraph{\bfseries Optimality of Gaussian measurements.}

We now address the central question: when are Gaussian measurements sufficient to achieve the max-relative entropy (\ref{dmaxdef}) between two (undisplaced \cite{notedisplacement}) $N$-mode Gaussian states?
%, i.e.\ when:
\begin{equation}
D^{\cal G}_{\max}(\rho\|\sigma)
%= \sup_{\gamma\geq i\Omega}\frac12\ln\left(\frac{\det (V_\sigma+\gamma)}{\det (V_\rho+\gamma)}\right)
\eqt{?}D_{\max}(\rho\|\sigma)\, .
\label{equality?}   
\end{equation}

To this end, consider the likelihood operator $M\coloneqq\sigma^{-\frac12}\rho\sigma^{-\frac12}$ in the definition (\ref{opnormdef}) of $D_{\max}$. This is  Gaussian and trace-class, with first moments $r_M=0$ and covariance matrix \cite{petzrecovery,grenyi2018,lemmings1}
\begin{equation}\label{vm}
    V_M=-V_\sigma-G(V_\sigma \Omega) V_\sigma (V_\rho -V_\sigma)^{-1}V_\sigma G(\Omega V_\sigma ).
\end{equation}
To find an optimal measurement for $D_{\max}$, we need to find the {\em top eigenvector} $\ket{\zeta}$ of $M$, i.e., the eigenvector corresponding to its largest eigenvalue. According to the Williamson decomposition (\ref{williamson}), the Gaussian operator $M$ can be diagonalized by a Gaussian unitary $U$ into a product of thermal states $M_{\text{th}}$,
\begin{equation}\label{meq}
   U^\dagger M U=M_{\text{th}}.
\end{equation}
Since each thermal state is diagonal in the Fock basis, with largest eigenvalue corresponding to the vacuum, the top eigenvector of $M$ is simply  $\ket\zeta=U\ket{0}$, with $\ket{0}$ the $N$-mode vacuum. Denoting by $S$ the symplectic transformation associated with the unitary $U$ in (\ref{meq}), the state $\zeta\coloneqq|\zeta\rangle\langle\zeta|$ will have covariance matrix $V_\zeta=SS^T$, which can be written in terms of  $V_M$ (\ref{vm}), and hence $V_\rho$ and $V_\sigma$, as a matrix geometric mean \cite{BHATIA,Gaussian_entanglement_revisited} 
\begin{equation}\label{eq:SST}
    V_\zeta=SS^T=V_M^{\frac12}\Big(V_M^{-\frac12}\Omega V_M^{-1}\Omega^T V_M^{-\frac12}\Big)^{\frac12}V_M^{\frac12}.
\end{equation}
%This is the covariance matrix of the pure Gaussian state $\zeta$ corresponding to the top eigenvector of $M$.

Now, due to definition \eqref{dmaxmeas}, projecting the operator $M$ onto a given Gaussian state $\ket{\phi}\!\bra{\phi}$ can be done by a Gaussian measurement with seed $\ket{\psi}\!\bra{\psi}$ on the states $\rho$ and $\sigma$. Here, $\ket{\phi}\!\bra{\phi}$ and $\ket{\psi}\!\bra{\psi}$ are related for some fixed $\sigma$ via the map $f_\sigma$,
\begin{equation}\label{eq:f} V_{\phi}=f_\sigma(V_{\psi})\coloneqq V_\sigma-G(V_\sigma \Omega)V_\sigma (V_\sigma +V_{\psi})^{-1}V_\sigma G(\Omega V_\sigma ).
\end{equation}
The domain of $f_\sigma$ is the space of legitimate quantum covariance matrices fulfilling (\ref{eq:bonafide}).
%, since $\ket{\psi}\!\bra{\psi}$ must correspond to a valid measurement.
In particular, an optimal measurement to achieve $D_{\max}(\rho\|\sigma)$ amounts to projecting $\rho$ and $\sigma$ onto the Gaussian state $\sigma^{-\frac12}\ket{\zeta}$, which has covariance matrix \cite{lemmings1}
\begin{equation} \label{covmat}
\gamma_{\mathrm{opt}}=-V_\sigma-G(V_\sigma \Omega)V_\sigma(V_\zeta-V_\sigma)^{-1}V_\sigma G(\Omega V_\sigma).
\end{equation}
A Gaussian measurement with seed \eqref{covmat} always attains the max-relative entropy by construction, but it represents a valid measurement only when $\gamma_{\mathrm{opt}}$ is a  bona fide quantum covariance matrix. When this holds is determined by the following.

\begin{prop}\label{valid}
    The matrix $\gamma_{\mathrm{opt}}$ in (\ref{covmat}) is a legitimate quantum covariance matrix if and only if $\rho$ and $\sigma$ satisfy the inequality
\begin{equation}\label{eq:generalregion}
    V_\zeta<V_\sigma,
\end{equation}
where $V_\zeta$ is expressed in terms of $V_\rho$ and $V_\sigma$ via  (\ref{vm}) and (\ref{eq:SST}).
\end{prop}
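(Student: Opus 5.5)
The plan is to invert the map $f_\sigma$ of \eqref{eq:f} and to reduce the bona fide condition \eqref{eq:bonafide} for $\gamma_{\mathrm{opt}}$ to a matrix inequality between $V_\zeta$ and $V_\sigma$.

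First note that \eqref{covmat} is exactly what one gets by solving $V_\zeta=f_\sigma(\gamma_{\mathrm{opt}})$ for $\gamma_{\mathrm{opt}}$. Write $K\coloneqq G(V_\sigma\Omega)V_\sigma$. Then $G(\Omega V_\sigma)=G(V_\sigma\Omega)^T$, so the last factor in \eqref{eq:f} is $K^T$, and $K$ is invertible because $V_\sigma>0$ and $G\geq\id$ spectrally. With this notation, \eqref{eq:f} reads
\[
V_\sigma-V_\phi=K(V_\sigma+V_\psi)^{-1}K^T,
\]
and \eqref{covmat} reads
\[
\gamma_{\mathrm{opt}}+V_\sigma=K(V_\sigma-V_\zeta)^{-1}K^T.
\]

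The key identity is the one recorded in the commented lemma (a variant of \cite[Lemma~10]{grenyi2018}):
\[
K(V_\sigma+i\Omega)^{-1}K^T=V_\sigma-i\Omega,
\]
together with its complex conjugate. Using it, I would write
\[
\gamma_{\mathrm{opt}}\pm i\Omega=\bigl[K(V_\sigma-V_\zeta)^{-1}K^T-(V_\sigma\mp i\Omega)\bigr]
=K\Bigl[(V_\sigma-V_\zeta)^{-1}-(V_\sigma\pm i\Omega)^{-1}\Bigr]K^T.
\]
Since $K$ is invertible, congruence shows that $\gamma_{\mathrm{opt}}\pm i\Omega\geq0$ holds if and only if $(V_\sigma-V_\zeta)^{-1}\geq(V_\sigma\pm i\Omega)^{-1}$, with the inverse of $V_\sigma-V_\zeta$ assumed to exist for now.

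Next I treat the two directions. Suppose $V_\zeta<V_\sigma$. Then $0<V_\sigma-V_\zeta$. Also $V_\zeta\pm i\Omega\geq0$, because $\zeta$ is a pure state, so $V_\sigma-V_\zeta\leq V_\sigma-V_\zeta+(V_\zeta\pm i\Omega)=V_\sigma\pm i\Omega$. Since $V_\sigma\pm i\Omega>0$ (as $\sigma>0$ is faithful) and operator inversion is antimonotone on positive definite matrices, this gives $(V_\sigma-V_\zeta)^{-1}\geq(V_\sigma\pm i\Omega)^{-1}$. Hence $\gamma_{\mathrm{opt}}$ is bona fide.

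Conversely, suppose $\gamma_{\mathrm{opt}}$ is a legitimate covariance matrix, so $\gamma_{\mathrm{opt}}\geq i\Omega$ and in particular $\gamma_{\mathrm{opt}}>0$. Then $K(V_\sigma-V_\zeta)^{-1}K^T=\gamma_{\mathrm{opt}}+V_\sigma>0$, and by congruence $(V_\sigma-V_\zeta)^{-1}>0$, i.e.\ $V_\zeta<V_\sigma$. The \emph{only if} direction is therefore immediate once one interprets \eqref{covmat} as requiring $V_\sigma-V_\zeta$ to be invertible. The boundary case, where $V_\sigma-V_\zeta$ is singular, should be discussed separately: there $\gamma_{\mathrm{opt}}$ diverges along some direction, giving the infinitely squeezed, homodyne-type limit. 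I would note that the strict inequality in \eqref{eq:generalregion} is what excludes this case from the proper covariance matrices.

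The main obstacle is verifying the identity $K(V_\sigma\pm i\Omega)^{-1}K^T=V_\sigma\mp i\Omega$. The cleanest route is through the Williamson decomposition \eqref{williamson}, $V_\sigma=S^T D S$. Under it, $G(V_\sigma\Omega)$ transforms covariantly, and the claim reduces to a single-mode computation with $V_\sigma=\nu\id_2$. There $G(\nu\Omega)=\sqrt{1-\nu^{-2}}\,\id_2$, and the identity becomes
\[
(\nu^2-1)(\nu\id_2\pm i\Omega)^{-1}=\nu\id_2\mp i\Omega,
\]
which holds because $(\nu\id_2+i\Omega)(\nu\id_2-i\Omega)=(\nu^2-1)\id_2$. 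The care needed is in tracking the symplectic transformation through $G$ using $S\Omega S^T=\Omega$.
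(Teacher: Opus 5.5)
Your proposal is correct and follows the paper's proof essentially step for step: the forward direction uses $V_\zeta\pm i\Omega\geq 0$, antimonotonicity of the inverse and the identity $K(V_\sigma+i\Omega)^{-1}K^T=V_\sigma-i\Omega$ (which the paper imports as a variant of a known lemma and you check via Williamson), and the converse is the same congruence argument via Sylvester's law of inertia. One slip to fix: $G(V_\sigma\Omega)$ has eigenvalues $\sqrt{1-\nu_j^{-2}}<1$, as your own single-mode computation shows, not eigenvalues $\geq 1$, so invertibility of $K$ comes from all symplectic eigenvalues $\nu_j$ of $V_\sigma$ exceeding $1$, i.e.\ from faithfulness of $\sigma$.
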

\begin{proof}
Begin by assuming \eqref{eq:generalregion}.  We then have $V_\zeta\geq -i\Omega \implies V_\zeta-V_\sigma\geq -V_\sigma-i\Omega \implies -(V_\zeta-V_\sigma)^{-1}\geq (V_\sigma+i\Omega)^{-1}$.
Using this, we can verify the condition (\ref{eq:bonafide}):
    \begin{equation*}
    \begin{aligned}
        \gamma_{\mathrm{opt}}+i\Omega&=-V_\sigma-G(V_\sigma \Omega)V_\sigma(V_\zeta-V_\sigma)^{-1}V_\sigma G(\Omega V_\sigma)+i\Omega\\
        &\geq -V_\sigma+G(V_\sigma \Omega)V_\sigma(V_\sigma +i\Omega)^{-1}V_\sigma G(\Omega V_\sigma)+i\Omega\\
        &=-V_\sigma+(V_\sigma - i\Omega)+i\Omega=0,
    \end{aligned}
\end{equation*}
where the third line follows from a variant of \cite[Lemma 10]{grenyi2018}.  % Lemma \ref{collapse}.
For the converse, define $X\coloneqq G(V_\sigma \Omega)V_\sigma$. We then have
    $-V_\sigma-X(V_\zeta-V_\sigma)^{-1}X^T\geq 0
    \implies X(V_\zeta-V_\sigma)^{-1}X^T<0
    \implies V_\zeta-V_\sigma<0$,
where for the second implication we use Sylvester's law of inertia.
\end{proof}

Proposition \ref{valid} tells us that a Gaussian measurement with seed (\ref{covmat}) fulfils the equality (\ref{equality?}) exactly when the states $\rho$ and $\sigma$ satisfy (\ref{eq:generalregion}); outside of this region, the operator (\ref{covmat}) does not describe a physical state, and thus cannot be used to form a valid Gaussian measurement. %We may now wonder if this is the only region of states for which we can achieve the max-relative entropy through Gaussian measurements. 
As proven in the \hyperref[EndMatter]{Appendix}~A [Lemma~\ref{lemma1}], Eq.~\eqref{eq:generalregion} defines indeed the unique region of states for which a Gaussian measurement achieves the max-relative entropy: any pure measurement giving an output close to $D_{\max}(\rho\|\sigma)$ must be close to the seed in \eqref{covmat}, and vice versa. 

We are left to consider the limiting case of singular (e.g., infinitely squeezed) seed covariance matrices, such as homodyne measurements. To do so, we investigate limits of sequences of Gaussian measurements $\left\{\ket{\psi_n}\!\bra{\psi_n}\right\}_n$, where each $\ket{\psi_n}\!\bra{\psi_n}$ is a valid seed.
For each $\ket{\psi_n}$, we define $\ket{\phi_n} \coloneqq {\sigma^{1/2} \ket{\psi_n}}/{\sqrt{\braket{\psi_n| \sigma |\psi_n}}}$. Lemma~\ref{lemma1} tells us that, since $M$ is Gaussian and trace-class, and its largest eigenvalue is non-degenerate, the condition~\eqref{equality?} is equivalent to
$\left|\braket{0|U^\dagger|\phi_n}\right|^2 \tendsnl{?} 1$, which is in turn equivalent %(since the mapping between bona fide covariance matrices and centered Gaussian states is a homeomorphism) 
to $V_{\phi_n} \tendsnl{?} V_\zeta$ \cite{notehomo}\nocite{CV-learning,G-resource-theories,optimal-estimates}. 
Hence, condition \eqref{equality?} is finally equivalent to asking
whether the closure $\overline{\mathrm{Im}(f_\sigma)}$ of the image of the map $f_\sigma$ defined in \eqref{eq:f} contains $V_\zeta$,  $V_\zeta \overset{?}{\in} \overline{\mathrm{Im}(f_\sigma)}$.
%\bb
%V_\zeta \overset{?}{\in} \overline{\mathrm{Im}(f_\sigma)}\, .
%\label{problem}
%\ee
This is addressed in \hyperref[EndMatter]{Appendix}~A [Proposition~\ref{prop:saturation}], where we prove that %, if 
 %$V_\sigma > i\Omega$ is any covariance matrix with all symplectic eigenvalues strictly larger than $1$, then 
  $f_\sigma$ satisfies  $V_\zeta \in \overline{\mathrm{Im}(f_\sigma)}$ if and only if
\bb
V_\zeta \leq V_\sigma\, .
\label{saturation_condition_f_sigma}
\ee

Summing up, we have completely solved the main problem (\ref{equality?})  by establishing the following Theorem.

\begin{thm}\label{maintheorem}
Let $\rho$ and $\sigma$ be two arbitrary $N$-mode Gaussian states with respective covariance matrices $V_\rho$ and $V_\sigma$, with $V_\sigma>V_\rho$.
The achievability of the max-relative entropy $D_{\max}(\rho\|\sigma)$ via Gaussian measurements splits into three cases:
\begin{enumerate}
    \item When $V_\zeta\in \mathrm{Im}(f_\sigma)$, a finitely squeezed Gaussian measurement with seed given by (\ref{covmat}) achieves $D_{\max}(\rho\|\sigma)$. This corresponds to $\rho$ and $\sigma$ satisfying \eqref{eq:generalregion}.

\item When $V_\zeta\in \overline{\mathrm{Im}(f_\sigma)}$ but $V_\zeta\notin \mathrm{Im}(f)$, an infinitely squeezed Gaussian measurement achieves $D_{\max}(\rho\|\sigma)$.  This corresponds to $\rho$ and $\sigma$ on the boundary of \eqref{saturation_condition_f_sigma}.
    
\item When $V_\zeta\notin\overline{\mathrm{Im}(f_\sigma)}$, there is a strict gap, $D^{\cal G}_{\max}(\rho\|\sigma)<D_{\max}(\rho\|\sigma)$. This corresponds to $\rho$ and $\sigma$ violating \eqref{saturation_condition_f_sigma}.
\end{enumerate}
\end{thm}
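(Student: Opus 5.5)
The theorem assembles results already in the paper. The plan is to reduce everything to whether $V_\zeta$ lies in $\mathrm{Im}(f_\sigma)$ or only in its closure, and then read off each case from Proposition~\ref{valid}, Lemma~\ref{lemma1} and Proposition~\ref{prop:saturation}.

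\textbf{Reduction.} By \eqref{dmaxmeas} and the definition of $M$, a pure Gaussian seed $\ket{\psi}$ gives the likelihood ratio
\[
\frac{\tr[\rho\,\psi]}{\tr[\sigma\,\psi]}=\braket{\phi|M|\phi},\qquad \ket{\phi}\propto\sigma^{1/2}\ket{\psi},
\]
and $V_\phi=f_\sigma(V_\psi)$. By \cite{notepuremeasurement} it suffices to take suprema over pure seeds. Since $M$ is Gaussian and trace-class, its top eigenvalue is non-degenerate, with eigenvector $\ket\zeta=U\ket0$. Hence $\braket{\phi|M|\phi}=\|M\|_\infty$ holds iff $\phi=\zeta$, that is, iff $V_\phi=V_\zeta$.

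\textbf{Case 1.} Suppose $V_\zeta\in\mathrm{Im}(f_\sigma)$. Inverting $f_\sigma$ gives exactly \eqref{covmat}: solving $V_\zeta=f_\sigma(\gamma)$ for $\gamma$ reproduces $\gamma_{\rm opt}$. The preimage is a bona fide covariance matrix precisely when $\gamma_{\rm opt}$ is one, and by Proposition~\ref{valid} this is equivalent to $V_\zeta<V_\sigma$, i.e.\ \eqref{eq:generalregion}. The attained value is $\ln\|M\|_\infty=D_{\max}$, and the seed is finitely squeezed because it is a genuine covariance matrix.

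\textbf{Cases 2 and 3.} Take any sequence of pure seeds $\psi_n$. By Lemma~\ref{lemma1}, the ratios $\braket{\phi_n|M|\phi_n}$ converge to $\|M\|_\infty$ iff $|\braket{\zeta|\phi_n}|^2\to1$. By \cite{notehomo}, this holds iff $V_{\phi_n}\to V_\zeta$. Therefore $D^{\cal G}_{\max}=D_{\max}$ iff $V_\zeta\in\overline{\mathrm{Im}(f_\sigma)}$.

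If $V_\zeta$ is in the closure but not in the image, the approximating seeds $V_{\psi_n}=f_\sigma^{-1}(V_{\phi_n})$ cannot converge to a bona fide covariance matrix. If they did, continuity of $f_\sigma$ on its domain would put $V_\zeta$ in the image. So the seeds must diverge, which is the infinitely squeezed (homodyne-type) limit that attains $D_{\max}$. By Proposition~\ref{prop:saturation}, the closure condition is $V_\zeta\le V_\sigma$, and by Case 1 the image condition is $V_\zeta<V_\sigma$. This case is therefore exactly the boundary where $V_\zeta\le V_\sigma$ but $V_\zeta\not<V_\sigma$.

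If $V_\zeta\notin\overline{\mathrm{Im}(f_\sigma)}$, no sequence of seeds reaches the value $\|M\|_\infty$, so the supremum is strictly smaller and $D^{\cal G}_{\max}<D_{\max}$. This is precisely the violation of \eqref{saturation_condition_f_sigma}.

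\textbf{Main obstacle.} The hard step is Proposition~\ref{prop:saturation}: showing that the closure of the image is exactly $\{V\text{ pure}:V\le V_\sigma\}$. The route I would try first is to write $f_\sigma(\gamma)=V_\sigma-X(V_\sigma+\gamma)^{-1}X^T$ and use the identity from the proof of Proposition~\ref{valid} (the variant of \cite[Lemma 10]{grenyi2018}). This shows that $f_\sigma(\gamma)\le V_\sigma$, with equality approached only as $\gamma\to\infty$ in some directions. Surjectivity onto the strict region then follows by inverting $f_\sigma$ as in Case 1.
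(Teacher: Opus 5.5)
Your assembly of the three cases is the paper's own. Lemma~\ref{lemma1} plus continuity of the covariance-matrix parametrisation reduce everything to whether $V_\zeta\in\overline{\mathrm{Im}(f_\sigma)}$; inverting $f_\sigma$ together with Proposition~\ref{valid} identifies the image with the strict region; and Proposition~\ref{prop:saturation} identifies the closure. If you may cite Proposition~\ref{prop:saturation}, nothing is missing.

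The gap is in the route you sketch for that proposition, which you rightly single out as the hard step. Inverting $f_\sigma(\gamma)=V_\sigma-X(V_\sigma+\gamma)^{-1}X^T$ as in Case~1 gives $\mathrm{Im}(f_\sigma)=\{V \text{ pure}: V<V_\sigma\}$ over pure seeds. That yields only $\overline{\mathrm{Im}(f_\sigma)}\subseteq\{V\text{ pure}: V\leq V_\sigma\}$, which is the easy necessity direction. Case~2 (no gap on the boundary) needs the converse, and so does the claim that a gap occurs \emph{only} when \eqref{saturation_condition_f_sigma} fails. Concretely, every pure $V_\zeta$ with $V_\zeta\leq V_\sigma$ but $V_\zeta\not<V_\sigma$ must be a limit of pure covariance matrices lying strictly below $V_\sigma$. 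Knowing the image does not give this. Pure covariance matrices form a nonconvex set, and the obvious perturbation $(1-\epsilon)V_\zeta$ lies strictly below $V_\sigma$ but violates $V\geq i\Omega$. A priori, a pure boundary point could be isolated from the strict region, and saying that equality is ``approached as $\gamma\to\infty$'' does not tell you which boundary points are actually reached.

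The paper closes this by interpolating inside the pure manifold. From $V_\sigma=ZDZ^T$ with $D>\id$, set $\gamma=ZZ^T$, which is pure with $\gamma<V_\sigma$, and follow $t\mapsto(SS^T)\#_t\gamma$. Its points are pure because purity is the relation $V=\Omega V^{-1}\Omega^T$, which weighted geometric means preserve (Lemma~\ref{geometric_mean_pure_QCMs_lemma}, Corollary~\ref{powers_cor}). For $t\in(0,1]$, monotonicity and Lemma~\ref{strict_monotonicity_geometric_mean_lemma} give $(SS^T)\#_t\gamma\leq V_\sigma\#_t\gamma=\gamma\#_{1-t}V_\sigma<V_\sigma$. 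So the curve lies in $\mathrm{Im}(f_\sigma)$ by your Case~1 inversion, and it tends to $V_\zeta$ as $t\to0$.

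A lighter repair also works, because Definition~\ref{def:gdmax} allows noisy seeds. The inversion in Proposition~\ref{valid} only uses $V+i\Omega\geq0$, so over all seeds the image is the convex set $\{V\geq i\Omega: V<V_\sigma\}$. Then $(1-\epsilon)V_\zeta+\epsilon\,ZZ^T$ approximates $V_\zeta$ from inside. To recover pure seeds, note that $\det(V_\sigma+\gamma)/\det(V_\rho+\gamma)$ is nonincreasing in $\gamma$ and every $\gamma\geq i\Omega$ dominates a pure one. Your divergence argument then forces those pure seeds to be unbounded on the boundary. Either way, some such density argument must be added before Cases~2 and~3 follow.
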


The above fully characterizes when Gaussian measurements are and are not optimal in discriminating between two Gaussian states according to the max-relative entropy. It shows there exists a region of pairs of states $\rho$ and $\sigma$, characterized by Eq.~(\ref{eq:generalregion}), for which $D_{\max}(\rho\|\sigma)$ can be achieved by an optimal Gaussian measurement. At the edge of this region, this optimal measurement becomes infinitely squeezed, corresponding to a homodyne-type detection described by a singular seed. The optimal measurement becomes then unphysical beyond the region of Eq.~(\ref{saturation_condition_f_sigma}), meaning Gaussian measurements are suboptimal in this case, and one must resort to non-Gaussian ones to determine the max-relative entropy between such states.  We remark that the achievability condition (\ref{saturation_condition_f_sigma}) is defined only in terms of the covariance matrices of $\rho$ and $\sigma$ and is independent of their displacements \cite{notedisplacement}, as discussed in \hyperref[EndMatter]{Appendix}~B.

Case 3 in Theorem~\ref{maintheorem} establishes an original {\em``no-go''} result for `all-Gaussian' quantum state discrimination, adding to the existing limitations of Gaussian states and operations in quantum information \cite{G-resource-theories,fundamista,noGdistillation,noGdistillation2,noGdistillation3,nogoerror}. Perhaps more surprisingly, the {\em ``yes-go''} Cases 1 and 2 reveal how simple Gaussian measurements are in fact optimally suited to discriminate between all pairs of multimode states with second moments fulfilling  (\ref{eq:generalregion}) according to $D_{\max}$. This is in  contrast with other distinguishability metrics such as fidelity, for which Gaussian measurements are only optimal in singular cases \cite{fidelity}.

\paragraph{\bfseries Single-mode case.}
We can illustrate our general results focusing on the single-mode case. This is extended to tensor products of single-mode states in \hyperref[EndMatter]{Appendix}~C  [Proposition~\ref{prop:separates}].
Let $\rho$ and $\sigma$ be two undisplaced single-mode Gaussian states. Via symplectic operations that leave $D_{\max}(\rho\|\sigma)$ invariant, we can transform $\rho$ and $\sigma$ into a squeezed thermal and a thermal state, respectively, with standard form covariance matrices 
\begin{equation}\label{VW1}
V_\rho =  a \ {\rm diag}(\mu, 1/\mu) \mbox{ \quad and \quad } V_\sigma =  b \ \id_2, 
\end{equation}
with $a \geq 1$, $b>1$, and $\mu\in (0,\infty)$. The covariance matrix $V_M$ (\ref{vm}) can be written as $V_M = {\rm diag} \left(\frac{ab\mu-1}{b-a\mu}, \frac{\mu-ab}{a-b\mu}\right)$.
The Gaussian unitary $U$ that transforms $M$ into a thermal state as in (\ref{meq}) is the squeezing operation that equalizes the magnitudes of the eigenvalues of the Gibbs matrix $\Gamma_M\coloneqq2i\Omega \ \text{arccoth} (V_M i\Omega)$ \cite{fidelity}. This corresponds to the symplectic operation $S = {\rm diag}\left(\left(\frac{(a-b\mu)(ab\mu-1)}{(ab-\mu)(a\mu-b)}\right)^{\frac14}, \left(\frac{(ab-\mu)(a\mu-b)}{(a-b\mu)(ab\mu-1)}\right)^{-\frac14}\right)$.

We start with the case in which a Gaussian measurement can achieve the max-relative entropy, i.e.\ Case 1 in Theorem~\ref{maintheorem}. 
For $V_\rho$ and $V_\sigma$ in the given standard form, the optimal measurement has a pure seed covariance matrix of the form
    $\gamma(z)={\rm diag}(z, 1/z)$,
corresponding to a generaldyne projection \cite{BUCCO} parametrized by the squeezing  $z\in \mathbb{R}_+$. 
We have then
\begin{equation}\label{DGmax1}
D^{\cal G}_{\max}(\rho\|\sigma) = \sup_{z\in \mathbb{R}_+} \frac12 \ln \left(\frac{\mu  (b+z) (b z+1)}{(a z+\mu ) (a \mu +z)}\right).
\end{equation}
The squeezing value $z_{\mathrm{opt}}$ realising the optimal measurement seed $\gamma_{\mathrm{opt}}\equiv \gamma(z_{\mathrm{opt}})$ can be found from (\ref{covmat}), yielding
$z_{\mathrm{opt}} =\frac{a b \left(\mu ^2-1\right)+\sqrt{(a b-\mu ) (a \mu -b) (a-b \mu ) (a b \mu -1)}}{a \left(b^2+1\right)-\left(a^2+1\right) b \mu }$.
%This solution achieves $D^{\cal G}_{\max}(\rho\|\sigma)=D_{\max}(\rho\|\sigma)$.  
This corresponds to a valid Gaussian measurement --- i.e.\ $z_{\mathrm{opt}}\in (0,\infty)$ --- in the region 
\begin{equation} \label{interval}
    \frac{b(a^2+1)}{a(b^2+1)}\eqqcolon\mu_{\min}<\mu<\mu_{\max} \coloneqq\frac{a(b^2+1)}{b(a^2+1)},
\end{equation}
that is the region (\ref{eq:generalregion}) in which the optimal Gaussian measurement seed is physical, according to Proposition~\ref{valid}.

Now we move to Cases 2 and 3 of Theorem~\ref{maintheorem}. For $V_\rho,V_\sigma$ as in (\ref{VW1}) with $\mu$ outside of (\ref{interval}), the optimal Gaussian measurement reduces to a homodyne detection of either the position or the momentum quadrature, corresponding to the limit $z\to 0$ or $z\to \infty$ in $\gamma(z)$.
At the boundaries of the region (\ref{interval}),  the max-relative entropy is still reached  in the sense that we can find a sequence of Gaussian measurements that gives outcomes approximating that of the ideal homodyne detection (Case $2$ in Theorem~\ref{maintheorem}). For states that do not satisfy $\mu_{\min} \leq \mu \leq \mu_{\max}$,
there is a strict gap between $D^{\cal G}_{\max}(\rho\|\sigma)$ and $D_{\max}(\rho\|\sigma)$, corresponding to Case 3 in Theorem~\ref{maintheorem}.

The Gaussian measured max-relative entropy is, explicitly,
\begin{equation}\label{DGmax1final}
\!\!\!\!D^{\cal G}_{\max}(\rho\|\sigma) =  \left\{  \!\!
\begin{array}{cc}
% \frac{1}{2} \ln \left(\frac{\left(a^2+1\right) \left(b^2+1\right) \mu -2 a b (\mu ^2+1)-2 \sqrt{(a b-\mu ) (a \mu -b) (a-b \mu ) (a b \mu -1)}}{\left(a^2-1\right)^2 \mu }\right), 
\frac12 \ln \left(\frac{\mu  (b+z_{\mathrm{opt}}) (b z_{\mathrm{opt}}+1)}{(a z_{\mathrm{opt}}+\mu ) (a \mu +z_{\mathrm{opt}})}\right),
\qquad &\frac{b(a^2 +1)}{a(b^2+1)}\!<\mu <\!\frac{a(b^2+1)}{b(a^2+1)} \\
 \frac{1}{2} \ln \left(\frac{b \mu }{a}\right), & \frac{a(b^2+1)}{b(a^2+1)} \leq \mu < \frac{b}{a} \\
 \frac{1}{2} \ln \left(\frac{b}{a \mu }\right), & \frac{a}{b} < \mu \leq \frac{b(a^2 +1)}{a(b^2+1)} \\
  \infty,  & b \mu <a\lor a \mu > b.  \\
\end{array}\right.\!\!\!\!\!\!\!
\end{equation}

\begin{figure}[t]
\centering
\includegraphics[width=4.2cm]{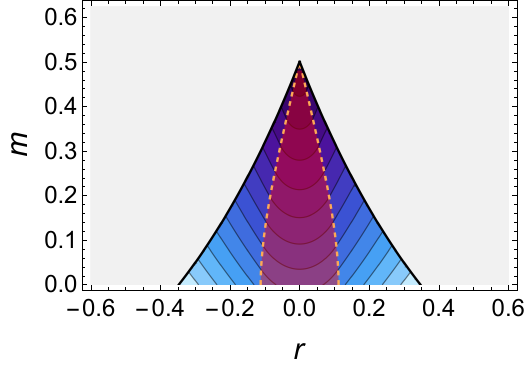}\hspace*{0.1cm}
\includegraphics[width=4.2cm]{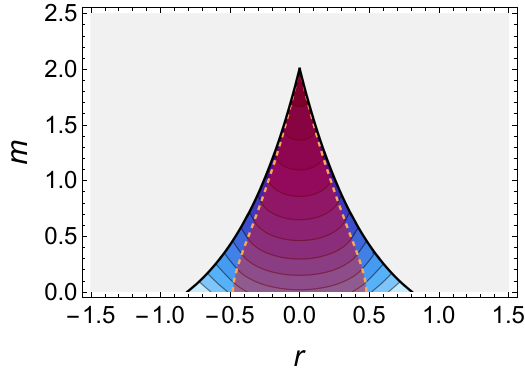} \\[-0.2cm]
\caption{\label{OKeeffe}
Contour plot of the Gaussian measured max-relative entropy $D^{\cal G}_{\max}(\rho\|\sigma)$ (shades of blue, increasing from darker to lighter) between two undisplaced single-mode Gaussian states, a squeezed thermal state $\rho$ with covariance matrix $V_\rho=(2m+1) {\rm diag}({\rm e}^{2r}, {\rm e}^{-2r})$ and a thermal state $\sigma$ with covariance matrix $V_\sigma=(2n+1) \id_2$, for $n=0.5$ (left) and $n=2$ (right).  In the gray background region,  $D^{\cal G}_{\max}(\rho\|\sigma)=\infty$. In the reddish shaded central region, $D^{\cal G}_{\max}(\rho\|\sigma)=D_{\max}(\rho\|\sigma)$, that is, Gaussian measurements achieve the max-relative entropy. Along the orange dashed boundary, the optimal Gaussian measurement is a homodyne and attains the max-relative entropy as a limiting case.} 
\end{figure}

\paragraph{\bfseries Gaussian data hiding.}
One can wonder how large the difference between $D_{\max}(\rho\|\sigma)$ and $D^{\cal G}_{\max}(\rho\|\sigma)$ can be when Gaussian measurements are {\it not} optimal. We can contrast the expression in (\ref{DGmax1final}) with the formula (\ref{gdmaxformula}) for the unrestricted max-relative entropy $D_{\max}(\rho\|\sigma)$, between the same pair of states (\ref{VW1}).
%, (\ref{gdmaxformula}), which in this single-mode case gives
%\begin{equation}\label{Dmax1final}
%D_{\max}(\rho\|\sigma) =  \left\{  
%\begin{array}{cc}
% \frac{1}{2} \ln \left(\frac{\left(a^2+1\right) \left(b^2+1\right) \mu -2 a b (\mu ^2+1)-2 \sqrt{(a b-\mu ) (a \mu -b) (a-b \mu ) (a b \mu -1)}}{\left(a^2-%1\right)^2 \mu }\right), \qquad \qquad & \frac{a}{b} < \mu < \frac{b}{a} \\
%  \infty,  & b \mu <a\lor a \mu > b.  \\
%\end{array}\right.
%\end{equation}
Let us adopt a physically natural parameterization, by letting $a=2m+1$, $b=2n+1$, and $\mu={\rm e}^{2r}$, in which $m,n$ are mean thermal photon numbers for $V_\rho$ and $V_\sigma$, respectively, and $r\in \mathbb{R}$ is the squeezing factor for $V_\rho$. The resulting analysis is illustrated in Fig.~\ref{OKeeffe}, highlighting the parameter regions corresponding to the different cases in Theorem~\ref{maintheorem}.

Qualitatively, Gaussian measurements are optimal when $\rho$ is not excessively squeezed, and the region of Gaussian optimality broadens with increasing temperature of $\sigma$. But how inferior can Gaussian measurements be outside of this region?

Take $\rho$ and $\sigma$ to be nearly pure and with the first state's squeezing being as large as possible within the allowed region of finite max-relative entropy (i.e.\ a state in the bottom right or bottom left corner of the blue shaded regions in Fig.~\ref{OKeeffe}). Specifically, let $a=2m+1=1+\epsilon$, $b=2n+1=1+\kappa \epsilon$, and $\mu = {\rm e}^{2r} = 1+ (\kappa-1) \epsilon (1-\epsilon )$, with $0 <\epsilon <1$ and $\kappa>1$. For these pairs of states, by taking series expansions in $\epsilon$, we get
\begin{equation}\label{datahidden}
\begin{aligned}
D^{\cal G}_{\max}(\rho\|\sigma) &\approx (\kappa-1)\epsilon + O(\epsilon^2)\,, \\
D_{\max}(\rho\|\sigma) &\gtrapprox \frac{\ln\kappa}{2}-\frac{\kappa \epsilon }{4}+ O(\epsilon^2)\,.
\end{aligned}
\end{equation}
This shows that, as long as $\kappa \gg 1$ and $\epsilon \ll \kappa^{-1}$, the Gaussian measured max-relative entropy is vanishingly small while the unrestricted max-relative entropy, which can be attained by suitable non-Gaussian measurements, can be arbitrarily large.

The above may be interpreted as a form of \emph{Gaussian data hiding} \cite{PhysRevA.104.052428,sabapathy2021bosonicdatahidingpower,wang2025gaussianquantumdatahiding}, in which the states we find are almost indistinguishable via Gaussian measurements, even though if we allow all measurements the states are very distinguishable through their max-relative entropy. This demonstrates a remarkable limitation of Gaussian measurements, even when working with only Gaussian states. Interestingly, the gap in the discriminating power of Gaussian measured versus general unmeasured quantities is not specific to $D_{\max}$, but persists for the extended family of sandwiched R\'enyi divergences, see \hyperref[EndMatter]{Appendix}~C  [Remark~\ref{remalpha}].

\paragraph{\bfseries Conclusion.}
%\section{Conclusion}\label{sec:conclusion}
In this Letter, we have characterized the distinguishability of bosonic Gaussian states in quantum information, and specifically addressed the question of whether Gaussian measurements are optimal for determining the max-relative entropy between two Gaussian states. We found there exists a region of states, defined by an inequality on their covariance matrices, for which there is a Gaussian measurement achieving the max-relative entropy. At the boundary of this region, the optimal Gaussian measurement becomes a homodyne-type detection. In turn, outside this region, Gaussian measurements are strictly suboptimal and non-Gaussian measurements are required to achieve the max-relative entropy. 

As an insightful case study, we derived explicit formulas for the Gaussian measured max-relative entropy between two single-mode Gaussian states in terms of their mean photon numbers and relative squeezing parameter, and presented examples of states for which the gap between the max-relative entropy and the Gaussian measured max-relative entropy can be made arbitrarily large, exhibiting a data hiding phenomenon.

Our results hold in the general multimode case and clarify the capabilities and limitations of Gaussian measurements in quantum state discrimination. Beyond their conceptual significance, they have direct implications for continuous variable communication protocols, where the available measurements determine how fast and how well information can be accessed and transmitted. Future work may explore more resource-theoretic quantities based on Gaussian measured divergences, as well as the role of Gaussian measurements to approach optimality in the discrimination of non-Gaussian states.

\begin{acknowledgments}
We acknowledge financial support from the Engineering and Physical Sciences Research Council (EPSRC Grants No.~EP/W524402/1, EP/X010929/1, and EP/T022140/1) and the European Union under the European Research Council (ERC Grant Agreement No.~101165230).
\end{acknowledgments}

\linespread{0.96}\selectfont
\bibliography{bibshort}
\linespread{1.0}\selectfont

%\clearpage
%\onecolumngrid
\appendix
\section{APPENDIX: END MATTER}\label{EndMatter}

%Here we present technical derivations and extensions of the main results. 

\subsection{A.~Proof of the achievability condition (\ref{saturation_condition_f_sigma})}

For two positive semi-definite operators $A,B\geq 0$ and some $t\in \R$, set the matrix geometric mean~\cite{BHATIA}
\bb
A\#_t B \coloneqq A^{1/2} \big(A^{-1/2} B A^{-1/2}\big)^t A^{1/2} .
\ee
Recall that $A\#_t B > 0$ if both $A,B>0$, and, although not obvious from the above formula, $A\#_t B = B \#_{1-t} A$.
Then we can rewrite (\ref{eq:SST}) as \cite[Lemma~12]{Gaussian_entanglement_revisited}
\begin{equation}\label{eq:SST2}
    V_\zeta=SS^T=V_M \# (\Omega V_M^{-1}\Omega^T).
\end{equation}

We will need the following lemmas.

\begin{lemma} \label{lemma1}
     Let $A,B>0$ be bounded operators such that $A\leq B$ and define $T:=B^{\sfrac{-1}{2}}AB^{\sfrac{-1}{2}}$. Let the largest eigenvalue of $T$, $\lambda_1(T)$, have associated eigenvector $\ket{0}$, and let the second largest eigenvalue be such that there is a strict gap, i.e. $\lambda_1(T)>\lambda_2(T)$. If $\bra{\psi}A\ket{\psi}\geq\bra{\psi}B\ket{\psi}(\lambda_1-\varepsilon)$, then the vector $\ket{\phi}\coloneqq \frac{B^{\sfrac{1}{2}}\ket{\psi}}{\sqrt{\bra{\psi}B\ket{\psi}}}$ satisfies $|\braket{\phi|0}|^2\geq 1-\frac{\varepsilon}{\lambda_1-\lambda_2}$.
     Vice versa, if some normalized vector $\ket{\psi}$ is such that the corresponding $\ket{\phi}$ satisfies $|\braket{\phi|0}|^2\geq 1-\frac{\delta}{\lambda_1}$, then $\bra{\psi}A\ket{\psi}\geq\bra{\psi}B\ket{\psi}(\lambda_1-\delta)$.
\end{lemma}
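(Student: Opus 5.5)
The plan is to reduce both directions to a statement about the Rayleigh quotient of $T$ at the unit vector $\ket{\phi}$. Since $B>0$, for any $\ket{\psi}$ with $\bra{\psi}B\ket{\psi}>0$ we can write $A = B^{1/2} T B^{1/2}$, and therefore
\begin{equation*}
\frac{\bra{\psi}A\ket{\psi}}{\bra{\psi}B\ket{\psi}} = \frac{\bra{\psi}B^{1/2}TB^{1/2}\ket{\psi}}{\bra{\psi}B\ket{\psi}} = \bra{\phi}T\ket{\phi}.
\end{equation*}
Here $\ket{\phi}$ is normalized by construction. The hypothesis $A\le B$ gives $0\le T\le \id$, so $T$ is a bounded positive operator. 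Its top eigenvalue $\lambda_1$ has eigenvector $\ket{0}$, and the rest of its spectrum lies in $[0,\lambda_2]$.

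For the first direction, let $P=\ketbra{0}$ and decompose $T\le \lambda_1 P + \lambda_2(\id-P)$. This holds because $T$ commutes with $P$, and on the range of $\id-P$ the spectrum of $T$ is bounded by $\lambda_2$. We interpret $\lambda_2$ as the supremum of the spectrum of $T$ restricted to that range, which is all the gap assumption needs, even without compactness. Setting $p\coloneqq|\braket{\phi|0}|^2$, we then get
\begin{equation*}
\lambda_1-\varepsilon \le \bra{\phi}T\ket{\phi} \le \lambda_1 p + \lambda_2(1-p) = \lambda_1 - (\lambda_1-\lambda_2)(1-p).
\end{equation*}
Rearranging gives $1-p\le \varepsilon/(\lambda_1-\lambda_2)$, which is the claim.

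For the converse, we use the lower bound $T\ge \lambda_1 P$, valid since $T\ge 0$. This gives $\bra{\phi}T\ket{\phi}\ge \lambda_1|\braket{\phi|0}|^2 \ge \lambda_1(1-\delta/\lambda_1)=\lambda_1-\delta$. Multiplying back by $\bra{\psi}B\ket{\psi}$ via the identity above yields $\bra{\psi}A\ket{\psi}\ge \bra{\psi}B\ket{\psi}(\lambda_1-\delta)$.

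The only step needing care is the operator inequality $T\le\lambda_1P+\lambda_2(\id-P)$ in infinite dimensions. I would justify it via the spectral theorem: $P$ is a spectral projection of $T$, $T$ reduces on $\operatorname{ran}P$ and its orthogonal complement, and on that complement $T$ is bounded above by the top of its spectrum there. I would also note that $\bra{\psi}B\ket{\psi}>0$ for $\psi\neq0$ because $B>0$, so $\ket{\phi}$ is well defined. Everything else is the one-line computation above.
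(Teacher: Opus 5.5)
Your proof is correct and follows the paper's argument. Both rewrite $\bra{\psi}A\ket{\psi}/\bra{\psi}B\ket{\psi}$ as the Rayleigh quotient $\bra{\phi}T\ket{\phi}$, then bound it above by $p\lambda_1+(1-p)\lambda_2$ and below by $\lambda_1 p$ with $p=|\braket{\phi|0}|^2$; the only difference is that you phrase the upper bound as the operator inequality $T\le\lambda_1 P+\lambda_2(\id-P)$, where the paper expands $\ket{\phi}=\sqrt{p}\ket{0}+\sqrt{1-p}\ket{1}$ with $\ket{1}\perp\ket{0}$ and uses exactly your reading of $\lambda_2$ as the top of the spectrum on $\ket{0}^\perp$.
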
 

\begin{proof}
We start from the first claim. We have $\bra{\phi}T\ket{\phi}=\frac{\bra{\psi}B^{\sfrac{1}{2}}B^{\sfrac{-1}{2}}AB^{\sfrac{-1}{2}}B^{\sfrac{1}{2}}\ket{\psi}}{\bra{\psi}B\ket{\psi}}=\frac{\bra{\psi}A\ket{\psi}}{\bra{\psi}B\ket{\psi}}$. Expanding $\ket{\phi}=\sqrt{p}\ket{0}+\sqrt{1-p}\ket{1}$, where $\ket{1}$ is orthogonal to $\ket{0}$ and $p:=|\braket{\phi|0}|^2$, yields
       $\bra{\phi}T\ket{\phi}=p\bra{0}T\ket{0}+(1-p)\bra{1}T\ket{1}
        \leq p\bra{0}T\ket{0}+(1-p)\lambda_2
        =p\lambda_1+(1-p)\lambda_2$.
Since $\bra{\phi}T\ket{\phi}=\lambda_1-\varepsilon$ by assumption, we arrive at $|\braket{\phi|0}|^2\geq 1-\frac{\varepsilon}{\lambda_1-\lambda_2}$.

 As for the second claim, assuming that $|\braket{\phi|0}|^2\geq 1-\frac{\delta}{\lambda_1}$, we write
$\frac{\bra{\psi}A\ket{\psi}}{\bra{\psi}B\ket{\psi}} = \bra{\phi}B^{-1/2}AB^{-1/2}\ket{\phi} \geq \bra{\phi} \left( \lambda_1 \ketbra{0} \right)\ket{\phi} = \lambda_1 |\braket{\phi|0}|^2 \geq \lambda_1 - \delta$,
concluding the proof.
\end{proof}

\begin{lemma} \label{strict_monotonicity_geometric_mean_lemma}
Let $A,B,C>0$ be three positive definite matrices, with $A < B$. Then, for all $t\in [0,1)$, we have $A \#_t C < B \#_t C$.
\end{lemma}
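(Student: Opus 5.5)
We need to show that if $A<B$ and $C>0$, then $A\#_t C<B\#_t C$ for $t\in[0,1)$. The plan is to use the symmetry $A\#_t C = C\#_{1-t} A$ recalled above, which gives
\[
A\#_t C = C^{1/2}\big(C^{-1/2} A C^{-1/2}\big)^{1-t} C^{1/2}.
\]
Set $s\coloneqq 1-t\in(0,1]$, $X\coloneqq C^{-1/2}AC^{-1/2}$ and $Y\coloneqq C^{-1/2}BC^{-1/2}$. Congruence by the invertible matrix $C^{-1/2}$ preserves strict operator inequalities, since $Y-X=C^{-1/2}(B-A)C^{-1/2}>0$. Hence $0<X<Y$. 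Congruence by $C^{1/2}$ again preserves strict positivity, so the claim reduces to showing $X^s<Y^s$ for $s\in(0,1]$.

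For $s=1$ this is immediate. For $s\in(0,1)$, the Löwner--Heinz inequality gives only the non-strict $X^s\le Y^s$, so strictness must be upgraded. Since $X<Y$ are matrices (finite dimension, compact unit sphere), there is $\epsilon>0$ with $X+\epsilon\id\le Y$. Applying Löwner--Heinz then yields $(X+\epsilon\id)^s\le Y^s$, and it remains to prove $X^s<(X+\epsilon\id)^s$.

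This last step is easy because $X$ and $X+\epsilon\id$ commute. On each eigenvector of $X$, with eigenvalue $x>0$, we compare $x^s$ with $(x+\epsilon)^s$, and $x^s<(x+\epsilon)^s$ because $u\mapsto u^s$ is strictly increasing on $(0,\infty)$ for $s>0$. Since there are finitely many eigenvalues, this gives $X^s<(X+\epsilon\id)^s$ strictly. Chaining the inequalities gives $X^s<Y^s$, and conjugating by $C^{1/2}$ gives $A\#_tC<B\#_tC$.

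Alternatively, strictness can be obtained from the integral representation
\[
u^s=\frac{\sin(\pi s)}{\pi}\int_0^\infty \lambda^{s-1}\,\frac{u}{u+\lambda}\,d\lambda .
\]
Here $u\mapsto u(u+\lambda)^{-1}=\id-\lambda(u+\lambda)^{-1}$ is strictly operator monotone, because inversion reverses strict order. Integrating a strictly positive continuous family against a positive measure then gives a strictly positive result.

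The main obstacle is exactly this passage from the standard non-strict Löwner--Heinz inequality to a strict one, which the $\epsilon$-shift argument settles. We also have to note that the case $t=1$ is excluded because $A\#_1C=C$ is independent of $A$, so no strict inequality can hold there.
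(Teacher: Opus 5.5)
Your proof is correct, but it takes a different route from the paper.

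\textbf{How the paper argues.} The paper stays at the level of the geometric mean. It converts $A<B$ into a multiplicative margin, $\lambda A\le B$ for some $\lambda>1$. Homogeneity, $(\lambda A)\#_t C=\lambda^{1-t}\,A\#_t C$, and the non-strict monotonicity of $\#_t$ in its first argument (quoted, not proved) then give $A\#_t C<\lambda^{1-t}A\#_t C\le B\#_t C$ in one line. Because a scalar multiple commutes with everything, no reduction is needed.

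\textbf{How you argue.} You use an additive margin $X+\epsilon\id\le Y$ instead. An additive shift does not pass through $\#_t$: comparing $A\#_tC$ with $(A+\epsilon\id)\#_tC$ directly would just be the original strict-monotonicity problem again. So you first need the symmetry $A\#_tC=C\#_{1-t}A$ and the congruence by $C^{-1/2}$ to reduce to powers $X^s<Y^s$. There the shift commutes with $X$, and the strict step becomes a scalar comparison of eigenvalues.

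\textbf{What each buys.} Your route is more self-contained. In effect you re-derive monotonicity of the geometric mean from L\"owner--Heinz rather than quoting it. Your integral-representation alternative goes further and gives strict operator monotonicity of $u\mapsto u^s$ directly, with no margin argument at all. The paper's route buys brevity.

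Both arguments rely on finite dimensionality, or on strict positivity meaning bounded below, to obtain the margin. Your explanation of why $t=1$ is excluded corresponds exactly to the paper's strictness factor $\lambda^{1-t}>1$ collapsing to $1$ at $t=1$.
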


\begin{proof}
The condition $A<B$ means $\exists \ \lambda>1$ such that $\lambda A \leq B$. Taking the geometric mean with $C$ of both sides gives
$A \#_t C < \lambda^{1-t} A \#_t C = (\lambda A) \#_t C \leq B \#_t C$,
where the first inequality holds because $A \#_t C > 0$, and the second by the monotonicity of the geometric mean in each of its arguments.
\end{proof}

\begin{lemma} \label{geometric_mean_pure_QCMs_lemma}
Let $\gamma_1,\gamma_2$ be the quantum covariance matrices of two pure states. Then, for any $t\in \R$, their geometric mean $\gamma_1 \#_t \gamma_2$ is also the quantum covariance matrix of a pure state.
\end{lemma}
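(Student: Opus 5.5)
Pure-state quantum covariance matrices $\gamma>0$ are characterized by being symplectic and symmetric positive: $\gamma\Omega\gamma=\Omega$, equivalently $\gamma = \Omega^T\gamma^{-1}\Omega$ (i.e.\ $\gamma^{-1}=\Omega\gamma\Omega^T$). Indeed, a pure Gaussian state has $\gamma=SS^T$ with $S$ symplectic. Conversely, if $\gamma>0$ and $\gamma\Omega\gamma=\Omega$, then $(\gamma\Omega)^2=-\id$. Hence $\gamma\Omega$ has eigenvalues $\pm i$, so all symplectic eigenvalues of $\gamma$ equal $1$, and Williamson (\ref{williamson}) gives $\gamma=S^TS$. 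Such a matrix satisfies the bona fide condition (\ref{eq:bonafide}) and has $\det\gamma=1$. The plan is therefore to show that $\gamma_1\#_t\gamma_2$ is positive definite (automatic from $\gamma_1,\gamma_2>0$) and satisfies $X^{-1}=\Omega X\Omega^T$.

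The key step is the congruence covariance of the weighted geometric mean: for any invertible $L$, $(L A L^T)\#_t(L B L^T)=L(A\#_t B)L^T$. This holds for $t\in\R$, since $A\#_tB$ can be characterized via $A^{-1}(A\#_tB)=(A^{-1}B)^t$, with the power taken through the positive-spectrum similarity to $A^{-1/2}BA^{-1/2}$. I will also use the inversion identity $(A\#_tB)^{-1}=A^{-1}\#_tB^{-1}$. This follows directly from the formula, since $(A^{1/2}(A^{-1/2}BA^{-1/2})^tA^{1/2})^{-1}=A^{-1/2}(A^{1/2}B^{-1}A^{1/2})^tA^{-1/2}$.

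Combining these, with $L=\Omega$ (orthogonal, $\Omega^T=\Omega^{-1}$):
\[
(\gamma_1\#_t\gamma_2)^{-1}=\gamma_1^{-1}\#_t\gamma_2^{-1}=(\Omega\gamma_1\Omega^T)\#_t(\Omega\gamma_2\Omega^T)=\Omega(\gamma_1\#_t\gamma_2)\Omega^T .
\]
Hence $X\coloneqq\gamma_1\#_t\gamma_2$ is positive and symplectic, so by the characterization above it is the covariance matrix of a pure state.

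The only step needing care is the congruence covariance for arbitrary real $t$ (rather than $t\in[0,1]$). I would verify it by setting $A'=LAL^T$ and $B'=LBL^T$ and checking that $Y=L(A\#_tB)L^T$ satisfies $A'^{-1}Y=L^{-T}(A^{-1}B)^tL^T=(A'^{-1}B')^t$. Here the power is defined by functional calculus on the diagonalizable matrix $A^{-1}B$, which has positive spectrum, and this functional calculus commutes with similarity. Together with uniqueness of the positive solution of this relation, this closes the argument.
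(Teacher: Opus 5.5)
Your proof is correct and follows essentially the same route as the paper. Both use the characterization $\gamma=\Omega\gamma^{-1}\Omega^T$ of pure-state covariance matrices, together with the inversion identity $(A\#_tB)^{-1}=A^{-1}\#_tB^{-1}$ and invariance of the geometric mean under congruence by $\Omega$; the paper runs the same chain of equalities starting from $\gamma_1\#_t\gamma_2$ instead of its inverse. You also verify explicitly the characterization (via $(\gamma\Omega)^2=-\id$) and congruence covariance for all real $t$, steps the paper cites or takes for granted; since $\Omega$ is orthogonal, that step is even simpler because $(\Omega A\Omega^T)^{1/2}=\Omega A^{1/2}\Omega^T$.
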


\begin{proof}
%A reasoning entirely analogous to that used to prove~\cite{LL-log-det} shows that a positive definite matrix $V$ is a bona fide quantum covariance matrix, i.e.\ it satisfies $V\geq i\Omega$, if and only if $V\geq \Omega V^{-1} \Omega^T$. Writing this out for $\gamma_i$ ($i=1,2$), we get $\gamma_i \geq \Omega \gamma_i^{-1} \Omega^T$
A simple calculation using properties of symplectic matrices and Williamson's theorem shows that a positive definite matrix $V$ is the quantum covariance matrix of a pure state if and only if $V=\Omega V^{-1} \Omega^T$ (see e.g.~\cite[Lemma~11]{LL-log-det}). Writing this for $\gamma_i$ ($i=1,2$) we get $\gamma_i = \Omega \gamma_i^{-1} \Omega^T$. Taking the geometric mean of both sides gives
$\gamma_1 \#_t \gamma_2 = \big(\Omega \gamma_1^{-1} \Omega^T\big) \#_t \big(\Omega \gamma_2^{-1} \Omega^T\big) = \Omega \big(\gamma_1^{-1} \#_t \gamma_2^{-1} \big) \Omega^T = \Omega (\gamma_1 \#_t \gamma_2)^{-1} \Omega^T$.
Hence, $\gamma_1 \#_t \gamma_2$ is the quantum covariance matrix of a pure state.
\end{proof}

By writing $\gamma^t = \id \#_t \gamma$, we get the following corollary.

\begin{cor} \label{powers_cor}
Let $\gamma$ be the quantum covariance matrix of a pure state. Then, for any $t\in \R$, the matrix power $\gamma^t$ is also the quantum covariance matrix of a pure state.
\end{cor}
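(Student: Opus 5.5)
The plan is to derive Corollary~\ref{powers_cor} directly from Lemma~\ref{geometric_mean_pure_QCMs_lemma} by taking one of the two pure covariance matrices to be the identity.

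First, observe that $\id_{2N}$ is itself the covariance matrix of a pure state, namely the $N$-mode vacuum. It satisfies the purity criterion used in the proof of Lemma~\ref{geometric_mean_pure_QCMs_lemma}:
\[
\Omega \id^{-1} \Omega^T = \Omega\Omega^T = \id .
\]

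Second, check the identity $\id \#_t \gamma = \gamma^t$ directly from the definition of the weighted geometric mean. With $A=\id$ one has $A^{\pm 1/2}=\id$, so
\[
\id\#_t\gamma = \id^{1/2}\big(\id^{-1/2}\gamma\,\id^{-1/2}\big)^t\id^{1/2} = \gamma^t .
\]
Here $\gamma^t$ is understood via functional calculus on the positive definite matrix $\gamma$. Positive definiteness holds because a quantum covariance matrix satisfies $\gamma\geq i\Omega$, and a pure one additionally satisfies $\gamma=\Omega\gamma^{-1}\Omega^T$, so $\gamma$ is invertible.

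Third, apply Lemma~\ref{geometric_mean_pure_QCMs_lemma} with $\gamma_1=\id$ and $\gamma_2=\gamma$. Since the lemma holds for every $t\in\R$, it gives that $\gamma^t$ is the covariance matrix of a pure state for every real $t$.

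As a sanity check, one can also verify the conclusion directly. The pure-state criterion gives $\gamma^{-1}=\Omega^T\gamma\Omega$, which is $\gamma$ conjugated by the orthogonal matrix $\Omega$. Hence
\[
\Omega\gamma^{-t}\Omega^T = \Omega\,(\Omega^T\gamma\Omega)^t\,\Omega^T = \gamma^t ,
\]
which is the criterion again, now for $\gamma^t$.

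There is no real obstacle here. The only point needing care is that $\id$ qualifies as a pure-state covariance matrix, and in the paper's normalization (vacuum covariance matrix equal to $\id$) it does.
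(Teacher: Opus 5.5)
Your proposal is correct and follows the paper's own route: the paper derives the corollary by writing $\gamma^t = \id \#_t \gamma$ and applying Lemma~\ref{geometric_mean_pure_QCMs_lemma} with the vacuum covariance matrix $\id$ as the first argument. Your direct check via $\Omega\gamma^{-t}\Omega^T=\gamma^t$ is also valid; it is the lemma's proof specialized to this case.
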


We are now ready to solve the main problem (\ref{equality?}), that is equivalent to $V_\zeta \overset{?}{\in} \overline{\mathrm{Im}(f_\sigma)}$, with $V_\zeta = SS^T$ given in (\ref{eq:SST2}). 

\begin{prop}\label{prop:saturation}
Let $V_\sigma > i\Omega$ be any quantum covariance matrix with all symplectic eigenvalues strictly larger than $1$, and let $S$ be a symplectic matrix. Then the map $f_\sigma$ defined by~\eqref{eq:f} satisfies that $SS^T \in \overline{\mathrm{Im}(f_\sigma)}$ if and only if
\bb
SS^T \leq V_\sigma\, .
\label{saturation_condition_f_sigma2}
\ee
\end{prop}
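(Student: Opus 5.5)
The plan is to first identify $\mathrm{Im}(f_\sigma)$ exactly, restricting to pure seeds as the paper allows. I would show that it equals the set of pure quantum covariance matrices $\gamma$ with $\gamma<V_\sigma$. Both directions of \eqref{saturation_condition_f_sigma2} then follow, the nontrivial one via an explicit approximating sequence built from geometric means.

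\emph{Step 1 (image of $f_\sigma$).} Write $X\coloneqq G(V_\sigma\Omega)V_\sigma$, which is invertible, so that $f_\sigma(V_\psi)=V_\sigma-X(V_\sigma+V_\psi)^{-1}X^T$.
\begin{itemize}
\item For any bona fide $V_\psi$ we have $V_\sigma+V_\psi>0$. Hence $X(V_\sigma+V_\psi)^{-1}X^T>0$, and so $f_\sigma(V_\psi)<V_\sigma$.
\item Since $\ket{\phi}\propto\sigma^{1/2}\ket{\psi}$ and $\sigma^{1/2}$ is a Gaussian operator, $\ket\phi$ is a pure Gaussian state whenever $\ket\psi$ is. So $f_\sigma(V_\psi)$ is a pure covariance matrix, i.e.\ it satisfies $V=\Omega V^{-1}\Omega^T$.
\item Conversely, take any pure $\gamma<V_\sigma$ and invert the formula: $V_\psi=-V_\sigma-X(\gamma-V_\sigma)^{-1}X^T$, which is the same expression as \eqref{covmat} with $V_\zeta$ replaced by $\gamma$. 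The argument in the proof of Proposition~\ref{valid} uses only $\gamma\geq -i\Omega$ and $\gamma<V_\sigma$, so it shows $V_\psi\geq i\Omega$. Applying the same correspondence $\ket\psi\propto\sigma^{-1/2}\ket\phi$ in reverse, $V_\psi$ is pure.
\end{itemize}
Thus $\mathrm{Im}(f_\sigma)=\{\gamma\ \text{pure}:\gamma<V_\sigma\}$. This step is where I expect the main technical care: checking that purity is preserved in both directions, which could also be verified through the identity $V=\Omega V^{-1}\Omega^T$ together with the algebraic relations satisfied by $G$.

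\emph{Step 2 (necessity).} Every element of the image satisfies $\gamma<V_\sigma$, so every limit point satisfies $\gamma\leq V_\sigma$. Hence $SS^T\in\overline{\mathrm{Im}(f_\sigma)}$ implies $SS^T\leq V_\sigma$.

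\emph{Step 3 (sufficiency).} Assume $SS^T\leq V_\sigma$.
\begin{itemize}
\item Take the Williamson decomposition $V_\sigma=T^T D T$, where $D=\bigoplus_j\nu_j\id_2$ and every $\nu_j>1$. The matrix $C\coloneqq T^TT$ is a pure covariance matrix and satisfies $C<V_\sigma$.
\item For $t\in[0,1)$ define $\gamma_t\coloneqq C\#_t(SS^T)$. By Lemma~\ref{geometric_mean_pure_QCMs_lemma}, each $\gamma_t$ is pure.
\item Lemma~\ref{strict_monotonicity_geometric_mean_lemma} together with monotonicity of the geometric mean gives
\[
\gamma_t=C\#_t SS^T< V_\sigma\#_t SS^T\leq V_\sigma\#_t V_\sigma=V_\sigma .
\]
\end{itemize}
So $\gamma_t\in\mathrm{Im}(f_\sigma)$ by Step 1. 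Finally, $\gamma_t\to C\#_1 SS^T=SS^T$ as $t\to1^-$ by continuity of $t\mapsto A\#_tB$. Hence $SS^T\in\overline{\mathrm{Im}(f_\sigma)}$. (The seeds $f_\sigma^{-1}(\gamma_t)$ will generally diverge when $SS^T\not<V_\sigma$, consistent with Case 2 of Theorem~\ref{maintheorem}.)
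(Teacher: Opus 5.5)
Your proof is correct and essentially the same as the paper's. Necessity comes from $f_\sigma(W)<V_\sigma$ passed to the closure. Sufficiency comes from interpolating, via the weighted geometric mean, between the pure matrix $ZZ^T<V_\sigma$ from the Williamson decomposition and $SS^T$; your $C\#_t SS^T$ with $t\to1^-$ is the paper's $(SS^T)\#_{1-t}\gamma$, and you use the same two lemmas (strict monotonicity, and purity of geometric means of pure covariance matrices). The only differences are presentational: the paper writes the interpolant as $S_tS_t^T$ with an explicit symplectic $S_t$. You instead spell out, via the argument of Proposition~\ref{valid}, the description of $\mathrm{Im}(f_\sigma)$ as the pure covariance matrices below $V_\sigma$, which the paper uses only implicitly when it asserts $S_tS_t^T\in\mathrm{Im}(f_\sigma)$.
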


\begin{proof}
That~\eqref{saturation_condition_f_sigma2} is necessary is clear from the fact that any matrix $f_\sigma(W) \in \mathrm{Im} (f_\sigma)$ satisfies $f_\sigma(W) < V_\sigma$. By taking the closure, we see that any matrix in $\overline{\mathrm{Im}(f_\sigma)}$ must also satisfy the same (non-strict) inequality.

We will now prove that~\eqref{saturation_condition_f_sigma2} is also sufficient. Let $V_\sigma = Z D Z^T$ be the Williamson decomposition of the quantum covariance matrix $V_\sigma$. By assumption, $D > \id$, so that the pure quantum covariance matrix $\gamma \coloneqq ZZ^T$ satisfies the inequality $\gamma < V_\sigma$~\cite[Lemma~11]{LL-log-det}. Now, for any $t\in \R$ set
\bb
S_t \coloneqq \big( (SS^T) \#_t \gamma\big)^{1/2} (SS^T)^{-1/2} S\, .
\ee
%where $A\#_t B$ is the matrix geometric mean between positive semi-definite matrices. 
We have:
\begin{enumerate}[(a)]

\item $S_t$ is a symplectic matrix for all $t\in \R$. In fact: $SS^T$ is the quantum covariance matrix of a pure state; hence, the same is true of $(SS^T) \#_t \gamma$, due to Lemma~\ref{geometric_mean_pure_QCMs_lemma}; by Corollary~\ref{powers_cor}, also $\big( (SS^T) \#_t \gamma\big)^{1/2}$ corresponds to a pure state, and is therefore symplectic; since the same can be said of $(SS^T)^{-1/2}$, and the product of symplectic matrices is symplectic, the claim is established.

\item $S_t^{\vphantom{T}} S_t^T < V_\sigma$ for all $t\in (0,1]$.  One finds that
$S_t^{\vphantom{T}} S_t^T = \big( (SS^T) \#_t \gamma\big)^{1/2} (SS^T)^{-1/2} S S^T (SS^T)^{-1/2} \big( (SS^T) \#_t \gamma\big)^{1/2} = (SS^T) \#_t \gamma < V_\sigma$,
where the last inequality holds due to Lemma~\ref{strict_monotonicity_geometric_mean_lemma}, because $\gamma < V_\sigma$.

\item $\lim_{t\to 0} S_t = S$: in fact, $t\mapsto S_t$ is manifestly continuous, and $S_0 = S$ by construction.
\end{enumerate}

From~(a) and~(b), we have $S_t^{\vphantom{T}} S_t^T \in \mathrm{Im}(f_\sigma)$ for all $t\in (0,1]$. Thus, (c)~guarantees $SS^T \in \overline{\mathrm{Im}(f_\sigma)}$, concluding the proof.
\end{proof}

\subsection{B.~Note on arbitrary displacements}\label{app:nodisplacements}
In the previous section, we solved the main problem (\ref{equality?}) under the assumption of undisplaced states. As anticipated  \cite{notedisplacement}, our resulting characterisation of the achievability of the max-relative entropy by Gaussian measurements holds in full generality, regardless of the displacements of the involved states. 
Given two $N$-mode Gaussian states $\rho$, and $\sigma$ such that $V_\rho < V_\sigma$, we are asking ourselves under what conditions
$$D_{\max}^\G(\rho\|\sigma) \coloneqq \sup_{\MM_\G} D_{\max}\left(\MM_\G(\rho)\|\MM_\G(\sigma)\right) \eqt{?} D_{\max}(\rho\|\sigma),$$
where the supremum is over all Gaussian measurements $\MM_\G$. This is the same as asking whether there exists a sequence of Gaussian measurements with POVM elements $\left( D(s) \psi_n D(-s) \right)_{s\in \R^{2N}}$, with $D(s)$ the displacement operator and $\psi_n = \ketbra{\psi_n}$ being a zero-mean, pure Gaussian state (the seed), such that
\begin{equation}
\lim_{n\to\infty} \sup_{s\in \R^{2N}} \frac{\braket{\psi_n|D(-s) \rho D(s) |\psi_n}}{\braket{\psi_n|D(-s) \sigma D(s) |\psi_n}} \eqt{?} \big\|\sigma^{-1/2} \rho \sigma^{-1/2}\big\|_\infty\, .
\label{equality_1}
\end{equation}

Define $\ket{\xi_n(s)} \coloneqq \frac{\sigma^{1/2} D(s) \ket{\psi_n}}{\sqrt{\braket{\psi_n|D(-s) \sigma D(s) |\psi_n}}}$.
We observe that, for every $n$, the set of states $\big\{\ket{\xi_n(s)}\big\}_{s\in \R^{2N}}$ coincides with the set of states $\big\{\ket{\phi_n(t)}\coloneqq D(t)\ket{\phi_n}\big\}_{t\in \R^{2N}}$,
where $\ket{\phi_n} \coloneqq \frac{\sigma_0^{1/2} \ket{\psi_n}}{\sqrt{\braket{\psi_n| \sigma_0 |\psi_n}}}$,
with $\sigma_0$ being the Gaussian state with zero mean and the same covariance matrix as $\sigma$. Indeed, the `golden rule' of~\cite[Appendix~A]{petzrecovery} entails that the first moments of $\xi_n(s)$ depend affinely on $s$, with the matrix multiplying $s$ being invertible. The covariance matrix, instead, is the same as that of $\phi_n$. 

By Lemma~\ref{lemma1}, the condition~\eqref{equality_1} is equivalent to
\bb
\sup_{t\in \R^{2N}} \big|\braket{\eta|\phi_n(t)}\big|^2 \tendsnl{?} 1\, ,
\label{overlap}
\ee
with $\ket{\eta}$ the eigenvector of $\sigma^{-1/2} \rho \sigma^{-1/2}$ with maximal eigenvalue. 
Due to the fact that the overlap between two Gaussian states with fixed covariance matrices is maximal when also the means coincide~\cite[Eq.~(4.51)]{BUCCO}, the supremum in~\eqref{overlap} is  achieved when $t$ coincides with the mean of $\eta$. This implies that~\eqref{overlap} is equivalent to requiring that
$\ket{\phi_n} \tendsnl{?} \ket{\eta_0}$,
where $\eta_0$ has the same covariance matrix as $\eta$ but zero mean, and the convergence is in trace norm (or, equivalently, in fidelity).

\subsection{C.~Extensions and further remarks}

Here we extend our single-mode results, including Eq.~(\ref{DGmax1final}), to multimode tensor product states. We show that the optimal Gaussian measurement factorizes into individual optimal single-mode measurements, which can be constructed as in the main text by treating each constituent individually.
\begin{prop}\label{prop:separates}
Let $\rho,\sigma$ be zero-mean Gaussian states such that there exists a symplectic matrix simultaneously block-diagonalising $V_\rho$ and $V_\sigma$, and let $\rho^{(j)}$ and $\sigma^{(j)}$ be reduced density matrices corresponding to the $j\mathrm{th}$ block. We have then
   \begin{equation}
        D^{\cal G}_{\max}(\rho\|\sigma)={\sum}_{j=1}^N  D^{\cal G}_{\max}(\rho^{(j)}\|\sigma^{(j)}).
   \end{equation} 
\end{prop}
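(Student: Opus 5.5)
The plan is to prove the two inequalities separately, after a preliminary reduction. The reduction uses the symplectic matrix $S$ from the hypothesis, which maps $V_\rho\mapsto SV_\rho S^T$, $V_\sigma\mapsto SV_\sigma S^T$ and $\gamma\mapsto S\gamma S^T$. Because $S$ preserves $\Omega$, the last map is a bijection of the set $\{\gamma\geq i\Omega\}$. It also leaves the determinant ratio in \eqref{eq:gdmax} unchanged, since $\det S=1$. Hence $D^{\cal G}_{\max}$ is invariant, and we may assume $V_\rho=\bigoplus_j A_j$ and $V_\sigma=\bigoplus_j B_j$. This means $\rho=\bigotimes_j\rho^{(j)}$ and $\sigma=\bigotimes_j\sigma^{(j)}$, with $A_j<B_j$.

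For the lower bound ``$\geq$'', restrict the supremum in \eqref{eq:gdmax} to product seeds $\gamma=\bigoplus_j\gamma_j$ with each $\gamma_j\geq i\Omega$. Then $\det(V_\sigma+\gamma)/\det(V_\rho+\gamma)=\prod_j \det(B_j+\gamma_j)/\det(A_j+\gamma_j)$. The suprema over the $\gamma_j$ decouple, which gives $\sum_j D^{\cal G}_{\max}(\rho^{(j)}\|\sigma^{(j)})$.

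For the upper bound ``$\leq$'', I will rewrite $D^{\cal G}_{\max}$ operationally. A Gaussian measurement with seed $\psi$ has outcome density proportional to $\tr[\rho\, D(s)\psi D(s)^\dagger]$. The classical max-relative entropy of the outcome distributions is therefore
\[
\ln\sup_s \frac{\tr[\rho\, D(s)\psi D(s)^\dagger]}{\tr[\sigma\, D(s)\psi D(s)^\dagger]}.
\]
By induction it then suffices to treat two blocks, $\rho=\rho_1\otimes\rho_2$ and $\sigma=\sigma_1\otimes\sigma_2$, with an arbitrary (possibly entangled) pure Gaussian seed $\psi$ and displacement $s$. The key step is the telescoping factorisation
\[
\frac{\tr[(\rho_1\otimes\rho_2)\psi_s]}{\tr[(\sigma_1\otimes\sigma_2)\psi_s]}
=\frac{\tr[(\rho_1\otimes\rho_2)\psi_s]}{\tr[(\sigma_1\otimes\rho_2)\psi_s]}\cdot\frac{\tr[(\sigma_1\otimes\rho_2)\psi_s]}{\tr[(\sigma_1\otimes\sigma_2)\psi_s]}.
\]
In the first factor, write $\tr[(X\otimes\rho_2)\psi_s]=\tr[X\,\omega_1]$ with $\omega_1\coloneqq\tr_2[(\id\otimes\rho_2^{1/2})\psi_s(\id\otimes\rho_2^{1/2})]$. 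In the second factor, similarly write $\tr[(\sigma_1\otimes Y)\psi_s]=\tr[Y\,\omega_2]$ with $\omega_2\coloneqq\tr_1[(\sigma_1^{1/2}\otimes\id)\psi_s(\sigma_1^{1/2}\otimes\id)]$. The first factor is then $\tr[\rho_1\omega_1]/\tr[\sigma_1\omega_1]$ and the second is $\tr[\rho_2\omega_2]/\tr[\sigma_2\omega_2]$.

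I would then show that each $\omega_i$ is, up to normalisation, a Gaussian state with a bona fide covariance matrix. It arises from a Gaussian pure state by a Gaussian completely positive map followed by a partial trace, and the `golden rule' of \cite{petzrecovery} gives its moments explicitly. A Gaussian state is a classical Gaussian mixture of displaced pure Gaussian states $D(t)\psi' D(t)^\dagger$. Consequently each factor is a ratio of two integrals against the same positive measure. Such a ratio is bounded by the supremum over $t$ of the pointwise ratios, i.e.\ by $\sup_t \tr[\rho_i D(t)\psi'D(t)^\dagger]/\tr[\sigma_i D(t)\psi'D(t)^\dagger]\leq e^{D^{\cal G}_{\max}(\rho^{(i)}\|\sigma^{(i)})}$. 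Multiplying the two bounds, and then taking the supremum over $\psi$ and $s$, gives the upper bound.

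The main obstacle is the middle step: verifying that $\omega_1$ and $\omega_2$ really are (unnormalised) Gaussian states, so that each admits a decomposition into displaced pure seeds. Sandwiching with $\rho_2^{1/2}$ is not trace-preserving, so this does not follow from channel properties alone. I would first try to derive it from the fact that $\rho_2^{1/2}$ is itself proportional to a Gaussian state, and use the covariance-matrix formula for products of Gaussian operators (as in \eqref{vm}) to check the bona fide condition \eqref{eq:bonafide}. Failing that, I would fall back on a Schur-complement computation of the determinants directly in \eqref{eq:gdmax}.
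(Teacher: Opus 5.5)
Your proof is correct, but your upper bound takes a different route from the paper's. The paper treats all blocks at once. It writes $\det(V_\sigma+\gamma)/\det(V_\rho+\gamma)=\det\bigl(\id+C^{1/2}(V_\rho+\gamma)^{-1}C^{1/2}\bigr)$ with $C=\bigoplus_j(V_\sigma^{(j)}-V_\rho^{(j)})$ and applies Fischer's inequality. It then identifies each diagonal block of $(V_\rho+\gamma)^{-1}$ as $(V_\rho^{(j)}+\gamma^{(j)})^{-1}$, where $\gamma^{(j)}$ is a Schur-complement seed obtained by conditioning on $V_\rho^{(\bar\jmath)}$.

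Your telescoping is instead an \emph{exact} chain rule: nothing is lost until the local suprema are taken. In matrix form it is the Schur determinant identity
\[
\frac{\det(V_\sigma+\gamma)}{\det(V_\rho+\gamma)}=\frac{\det(V_{\sigma_1}+\tilde\gamma_1)}{\det(V_{\rho_1}+\tilde\gamma_1)}\cdot\frac{\det(V_{\sigma_2}+\tilde\gamma_2)}{\det(V_{\rho_2}+\tilde\gamma_2)},
\]
with $\tilde\gamma_1=\gamma_{11}-\gamma_{12}(\gamma_{22}+V_{\rho_2})^{-1}\gamma_{21}$ and $\tilde\gamma_2=\gamma_{22}-\gamma_{21}(\gamma_{11}+V_{\sigma_1})^{-1}\gamma_{12}$. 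Here $\tilde\gamma_1$ coincides with the paper's $\gamma^{(1)}$, while $\tilde\gamma_2$ is conditioned on $\sigma_1$ rather than $\rho_1$.

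The step you flag as the main obstacle is standard. The partial trace is cyclic in the traced factor, so $\omega_1=\tr_2[(\id\otimes\rho_2)\psi_s]$. This is the unnormalised conditional state of block~1 after a noisy general-dyne measurement on block~2 with seed $\rho_2$. It is therefore Gaussian with covariance $\tilde\gamma_1$, and that covariance is automatically bona fide because $\omega_1\geq 0$ is trace class. The same holds for $\omega_2$ with covariance $\tilde\gamma_2$. Since Definition~\ref{def:gdmax} allows mixed seeds $\gamma\geq i\Omega$, you can even skip the decomposition into displaced pure seeds: each factor is directly a likelihood ratio at one outcome of a valid Gaussian measurement.

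The paper's route is shorter and handles all $N$ blocks in one step. However, it relies on Fischer's inequality and takes for granted that $\gamma^{(j)}$ is a valid seed, which is exactly the conditioning fact your argument makes explicit. Your route needs induction over blocks, but it shows operationally why the effective local seeds are physical.
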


\begin{proof}
Clearly $ D^{\cal G}_{\max}(\rho\|\sigma)\geq\sum_{j=1}^N  D^{\cal G}_{\max}(\rho^{(j)}\|\sigma^{(j)})$, since an optimal measurement on each mode will be a candidate for the optimal global measurement, and for this choice of $\gamma=\bigoplus_{j=1}^N\gamma^{(j)}$ we have 
$D^{\cal G}_{\max}(\rho\|\sigma) \geq\ln\frac{\det\left(\bigoplus_{j=1}^N \left(V_\sigma^{(j)}+\gamma^{(j)}\right)\right)}{\det\left(\bigoplus_{j=1}^N \left(V_\rho^{(j)}+\gamma^{(j)}\right)\right)}=\sum_{j=1}^N \ln\frac{\det\left(V_\sigma^{(j)}+\gamma^{(j)}\right)}{\det\left(V_\rho^{(j)}+\gamma^{(j)}\right)}=\sum_{j=1}^N  D^{\cal G}_{\max}(\rho^{(j)}\|\sigma^{(j)})$.

For the converse, consider a general measurement seed with covariance matrix  $\gamma$ written in block form $\gamma=(\gamma_{jk})_{j,k=1}^N$.
Because $V_\rho = \bigoplus_{j=1}^N V_\rho^{(j)}$ is block-diagonal, the $j$th diagonal block of the inverse matrix $(V_\rho+\gamma)^{-1}$ is exactly $(V_\rho^{(j)}+\gamma^{(j)})^{-1}$, where $\gamma^{(j)}$ is given by the Schur complement of the remaining blocks, $\gamma^{(j)} \coloneqq \gamma_{jj} - \gamma_{j\bar{\jmath}} \big(V_\rho^{(\bar{\jmath})} + \gamma_{\bar{\jmath}\bar{\jmath}}\big)^{-1} \gamma_{\bar{\jmath}j}$,
with $\bar{\jmath}$ denoting all modes except the $j$th. The covariance matrix $\gamma^{(j)}$ represents a valid measurement seed for the $j$th block.
We can then write
\begin{equation*}
\resizebox{\linewidth}{!}{%
$\begin{aligned}
  & D^{\cal G}_{\max}(\rho\|\sigma)=\sup_{\gamma\geq i\Omega}\ln\frac{\det\left(\bigoplus_{j=1}^N V_\sigma^{(j)}+\gamma\right)}{\det\left(\bigoplus_{j=1}^N V_\rho^{(j)}+\gamma\right)} \\   
&=\sup_{\gamma\geq i\Omega}\ln\det\left(\mathds{1}+\left(\!\bigoplus_{j=1}^N\! V_\sigma^{(j)}\!-\!V_\rho^{(j)}\right)^\frac12\! \left(\!\bigoplus_{j=1}^N\! V_\rho^{(j)}\!+\!\gamma\right)^{-1}\!\left(\!\bigoplus_{j=1}^N\! V_\sigma^{(j)}\!-\!V_\rho^{(j)}\right)^\frac12\right)\\    
&\leq\ln\left[\prod_{j=1}^N\sup_{\gamma^{(j)}\geq i\Omega^{(j)}}\det\left(\mathds{1}+\left(V_\sigma^{(j)}-V_\rho^{(j)}\right)^\frac12 \left(V_\rho^{(j)}+\gamma^{(j)}\right)^{-1}\left(V_\sigma^{(j)}-V_\rho^{(j)}\right)^\frac12\right)\right]\\   
&=\sum_{j=1}^N\sup_{\gamma^{(j)}\geq i\Omega^{(j)}}\ln\frac{\det\left(V_\sigma^{(j)}+\gamma^{(j)}\right)}{\det\left(V_\rho^{(j)}+\gamma^{(j)}\right)}=\sum_{j=1}^N
    D^{\cal G}_{\max}(\rho^{(j)}\|\sigma^{(j)}),
    \end{aligned}$}
\end{equation*}
where to get the inequality in the third line we use Fischer's inequality ($\det A\leq \prod_k \det A_{kk}$ for a block matrix $A \geq 0$).
\end{proof}

\begin{rem}\label{remalpha}
The data hiding analysis in the main text extends to all the sandwiched $\alpha$-R\'enyi entropies for $\alpha \geq 1$ \cite{M_ller_Lennert_2013,Wilde2014},
\begin{equation}\label{def:alpharenyis}
   D_\alpha(\rho\|\sigma)\coloneqq\frac{1}{\alpha-1}\ln\tr\left[\left(\sigma^{\frac{1-\alpha}{2\alpha}}\rho\sigma^{\frac{1-\alpha}{2\alpha}}\right)^\alpha\right],
\end{equation} 
and their Gaussian measured counterparts \begin{equation}\label{def:gmeasalpharenyis}
    D_\alpha^{\cal G}(\rho\|\sigma)\coloneqq\sup_{\gamma\geq i\Omega}D_\alpha\left(P(V_\rho+\gamma,0)\|Q(V_\sigma+\gamma,0)\right).
\end{equation}

For the states $\rho$ and $\sigma$ introduced before (\ref{datahidden}) in the main text, take the Umegaki relative entropy $D_1(\rho\|\sigma)\coloneqq\tr[\rho\ln\rho-\rho\ln\sigma]$ (i.e.\ the limit $\alpha\to 1$), for which a series expansion in $\epsilon$ gives
\begin{equation}
\begin{aligned}
D^{\cal G}_{1}(\rho\|\sigma)& \leq D^{\cal G}_{\max}(\rho\|\sigma) \approx (\kappa-1)\epsilon + O(\epsilon^2)\,, \\
D_{1}(\rho\|\sigma)& %\approx \frac{3\ln\kappa}{2}+\frac{1}{4}(1-\kappa+4\ln \kappa)\epsilon+ O(\epsilon^2)
\gtrapprox\frac{3\ln\kappa}{2}-\frac{\kappa\epsilon}{4}+O(\epsilon^2)\,.
\end{aligned}
\end{equation}
This shows that, as long as $\kappa \gg 1$ and $\epsilon \ll \kappa^{-1}$, the Gaussian measured Umegaki relative entropy is vanishingly small while the unrestricted Umegaki relative entropy can be arbitrarily large. By monotonicity of the sandwiched R\'enyi entropies in $\alpha$ \cite{M_ller_Lennert_2013}, these results of an arbitrarily large gap between the distinguishability quantified by (\ref{def:alpharenyis}) and that achievable by Gaussian measurements   (\ref{def:gmeasalpharenyis}) hold for the whole $\alpha \geq 1$ family. 
\end{rem}

\end{document}